\newtheorem{theorem}{Theorem}[section]
\newtheorem{lemma}[theorem]{Lemma}
\newtheorem{proposition}{Proposition}
\DeclareMathOperator{\sgn}{sgn}
\title{Divide-and-Conquer Dynamics in \\ AI-Driven Disempowerment}
\author{%
 Peter S. Park \\
  Department of Physics\\
 MIT
   \And
   Max Tegmark \\
  Department of Physics\\
  MIT\\
}
\begin{document}

\maketitle
 \raggedbottom

\begin{abstract}
 AI companies are attempting to create AI systems that outperform humans at most economically valuable work. Current AI models are already automating away the livelihoods of some artists, actors, and writers. But there is infighting between those who prioritize current harms and future harms. We construct a game-theoretic model of conflict to study the causes and consequences of this disunity. Our model also helps explain why throughout history, stakeholders sharing a common threat have found it advantageous to unite against it, and why the common threat has in turn found it advantageous to divide and conquer. 
 
 Under realistic parameter assumptions, our model makes several predictions that find preliminary corroboration in the historical-empirical record. First, current victims of AI-driven disempowerment need the future victims to realize that their interests are also under serious and imminent threat, so that future victims are incentivized to support current victims in solidarity. Second, the movement against AI-driven disempowerment can become more united, and thereby more likely to prevail, if members believe that their efforts will be successful as opposed to futile. Finally, the movement can better unite and prevail if its members are less myopic. Myopic members prioritize their future well-being less than their present well-being, and are thus disinclined to solidarily support current victims today at personal cost, even if this is necessary to counter the shared threat of AI-driven disempowerment.
 \end{abstract}

\pagebreak

\newpage

\noindent
\begin{minipage}[t]{.45\linewidth}
  \setlength{\epigraphwidth}{1\linewidth}\epigraph{First they came for the socialists, 

and I did not speak out---

     Because I was not a socialist.

\hspace{1pt}

Then they came for the trade unionists, 

and I did not speak out---

     Because I was not a trade unionist.

\hspace{1pt}

Then they came for the Jews, 

and I did not speak out---

     Because I was not a Jew.

\hspace{1pt}

Then they came for me—

and there was no one left to speak for me.}{\textit{Martin Niem\"{o}ller}}
\end{minipage}%
\begin{minipage}[t]{.55\linewidth}
  \setlength{\epigraphwidth}{0.68\linewidth}\epigraph{United we stand, divided we fall.}{\textit{Aesop}}
\end{minipage}

\renewcommand{\arraystretch}{1.5}

\section{Introduction}

As general-purpose AI models advance in capability, career paths are increasingly getting automated away. Current examples include artists, \supercite{thorpe2023chatgpt} actors, \supercite{lawler2023aihollywood} and writers.\supercite{Kelly2023} The trajectory of AI increasingly replacing human livelihoods---rather than augmenting them---seems likely to continue.\supercite{brynjolfsson2022turing} OpenAI, the creator of ChatGPT, has the stated mission of creating ``highly autonomous systems that outperform humans at most economically valuable work.” \supercite{openai2018charter} 

OpenAI's mission is consistent with economist Daron Acemoglu's prediction that ``a very large number of people will only have marginal jobs, or not very meaningful jobs.'' \supercite{ito2023ai} Philosopher Aaron James goes further, predicting that  ``in  due course at least, [AI] really might cause lasting structural unemployment on a mass scale.'' \supercite{james} In this scenario,
\begin{quote}
``...jobs are steadily automated, year after year. In the old days, for every job destroyed, a new one was eventually created, leaving total employment more or less unchanged. Now deep-learning machines, aided by clever entrepreneurs, race ahead and do the new tasks as well...

...Many people---most people, even in their prime years—simply can’t find tolerable 
work and stop looking.''
\end{quote}

If we humans were in fact thrust into an indefinite state of economic uselessness, it is reasonable to anticipate that we humans would eventually lose our bargaining power: not just over the economic aspects of society, but over its social and political aspects as well.

Despite this, most people have not yet taken meaningful steps towards preventing this AI-driven disempowerment. AI development remains largely unregulated in the United States, the country with the world's most advanced AI industry. \supercite{joseph2023legislature} While some copyright lawsuits, \supercite{bearne2023ai} contract negotiations, \supercite{lawler2023aihollywood} and consumer boycotts \supercite{frenkel2023not} have been initiated in order to protect against and dissuade the AI automation of livelihoods, these efforts have yet to see lasting successes.  Most tellingly, AI companies and their associates continue to receive large amounts of funding to increasingly automate various parts of our society's economy, and are substantively progressing towards this goal. \supercite{metinko2023aifunding}

What accounts for the lack of effective resistance to AI-driven disempowerment? We propose that this is largely due to the disunity between current victims and future victims. The current victims of AI-driven disempowerment---those whose professions are already being automated away---are presently engaged in a lonely struggle, as society seems content to view their predicament as an unfortunate but necessary casualty of technological progress. Future victims---those who are not yet feeling the direct impact of AI displacement---underestimate the severity of these threats and the speed with which they may manifest. This underestimation induces future victims to not consider AI automation to be an imminent threat to their own interests, thereby dissuading them from materially supporting current victims' efforts against AI-driven disempowerment.

This disunity between current and future victims is exacerbated by the divide-and-conquer dynamics inherent in AI development. Powerful stakeholders who are invested in the proliferation of AI use their influence to downplay its future dangers, redirect attention towards its benefits, and portray society's transition to AI as inevitable. To illustrate, consider the following podcast quote\supercite{fridman2023zuckerberg} of Mark Zuckerberg: the founder and CEO of Meta, one of the leaders of the AI race.
\begin{quote}
``I don't know if mundane is the right word, but there are concerns that already exist, about people using AI tools to do harmful things of the type that we’re already aware...

...That’s going to be a pretty big set of challenges that the companies working on this are going to need to grapple with, regardless of whether there is an existential crisis as well sometime down the road.''
\end{quote}
Zuckerberg's downplaying of future AI harms can be interpreted as reinforcing in the minds of podcast-interview listeners a false dichotomy between current and future harms. Such divide-and-conquer dynamics can serve to isolate the current victims of AI: by persuading future victims to wait-and-see, rather than solidarily supporting current victims' efforts against their shared threat.

The human tendency to wait-and-see in the face of imminent AI automation is well-exemplified by the fate of the digital artists\supercite{wong2023artists} (see Figure~\ref{fig:artistcomparison}). The AI art models of 2022 were promising, but were not yet a complete automation threat. Some  artists presciently panicked, and saw the importance of acting right away.  \supercite{deck2022ai}   But other artists---faced with the narrative of AI art's deficiencies and  the doctrine of downplaying hype about AI \supercite{mysteryAI_2022_episode4}---were convinced to wait-and-see. \supercite{hemlock2022divided} And in just one year, AI art made a large leap in aesthetic quality, to the point of indefinitely taking away many artists' income streams.\supercite{aiart}

\begin{figure}[h]
    \centering
    \tcbox[colframe=white!30!black,
           colback=white!30]{
    \begin{subfigure}[t]{0.28\textwidth}
         \begin{minipage}{\textwidth}
            \includegraphics[width=\linewidth]{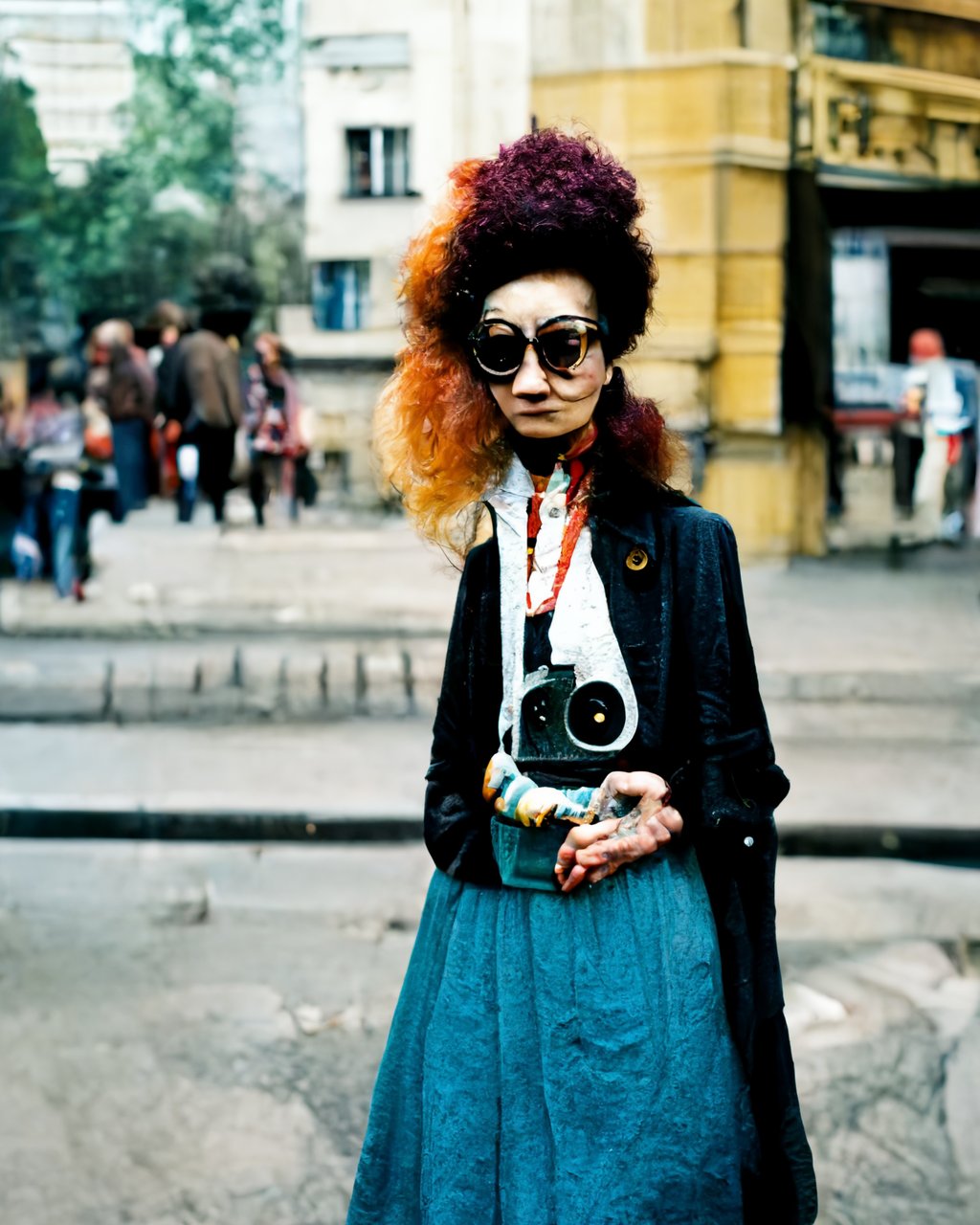}
            \caption{Midjourney\supercite{stpierre2023} \textit{(July 2022)}} 
        \end{minipage}
        \begin{minipage}{\textwidth}
           \vspace{12pt} \includegraphics[width=\linewidth]{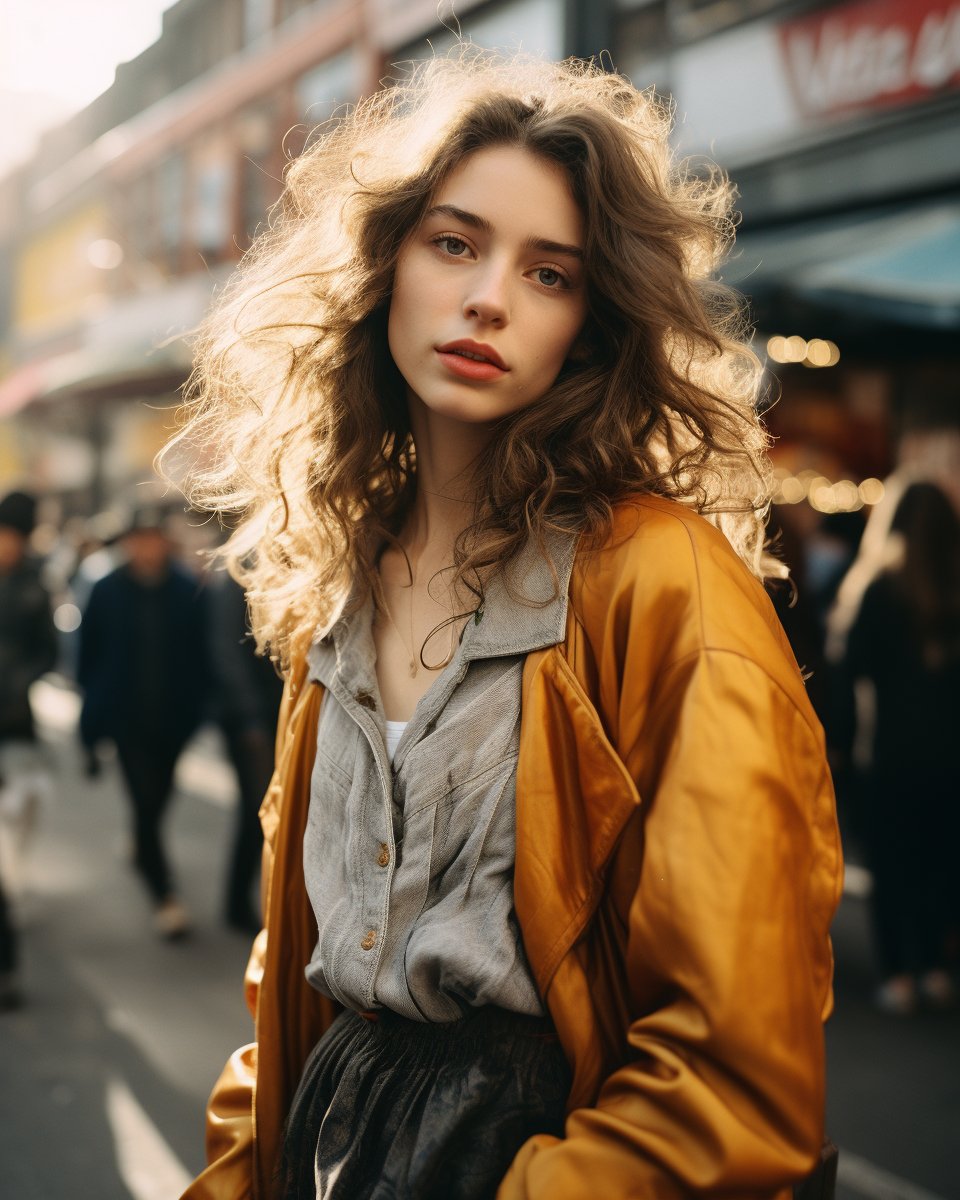}
            \caption{Midjourney\supercite{stpierre2023} \textit{(July 2023)}} 
        \end{minipage}
    \end{subfigure}
    \hspace{10pt}
   \begin{subfigure}[t]{0.6\textwidth}
            \vspace{-78pt}
        \begin{minipage}{\textwidth}
            \includegraphics[width=\linewidth]{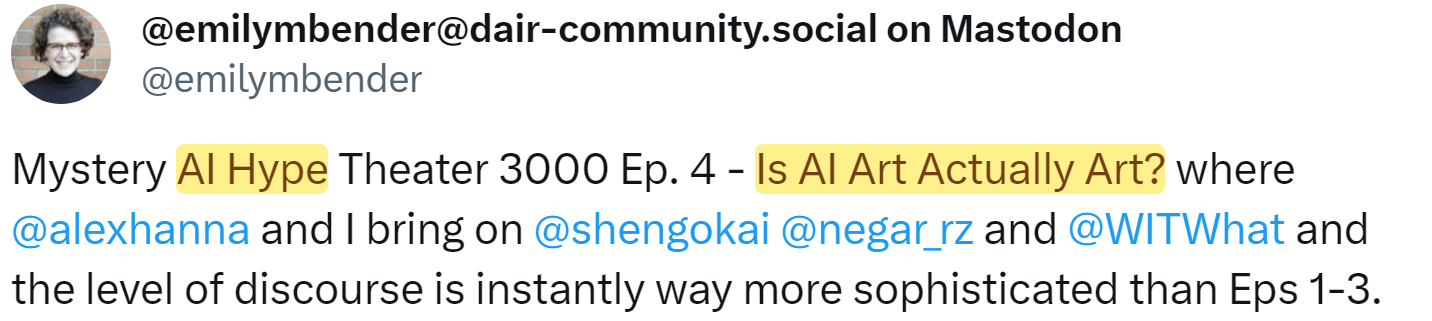}
            \caption{Emily Bender's post\supercite{bender2022mystery} \textit{(November 2022)}}
        \end{minipage}
          \begin{minipage}{\textwidth}
          \vspace{32.5pt}  \includegraphics[width=\linewidth]{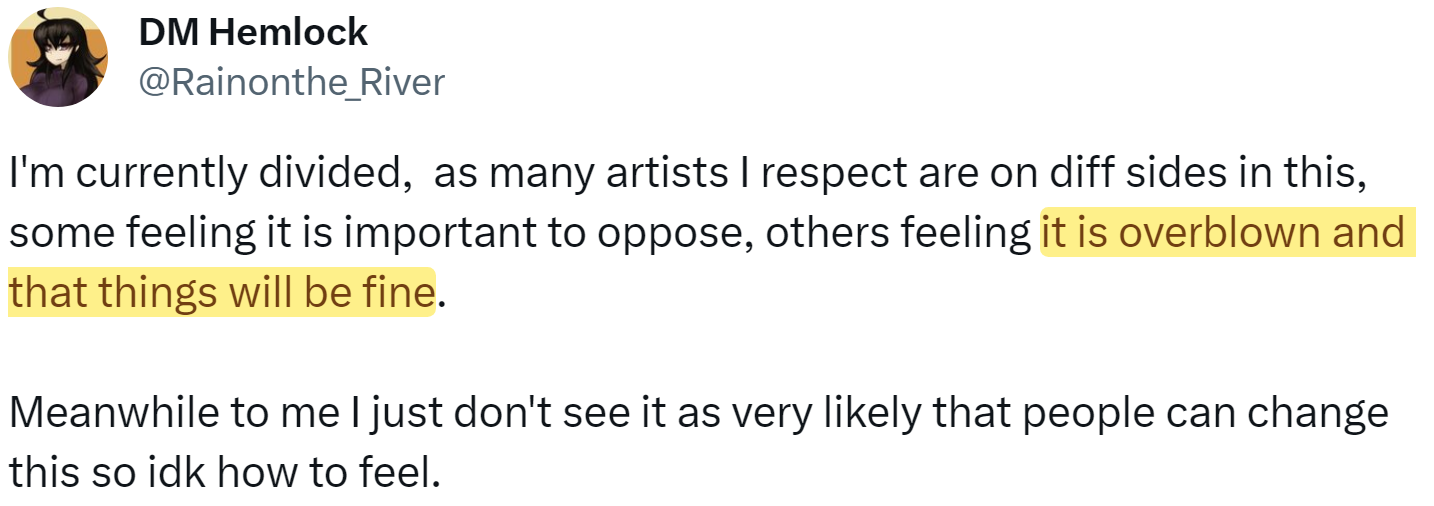}
            \caption{Artist DM Hemlock's post\supercite{hemlock2022divided} \textit{(December 2022)}} 
        \end{minipage}
          \begin{minipage}{\textwidth}
            \vspace{32.5pt}\includegraphics[width=\linewidth]{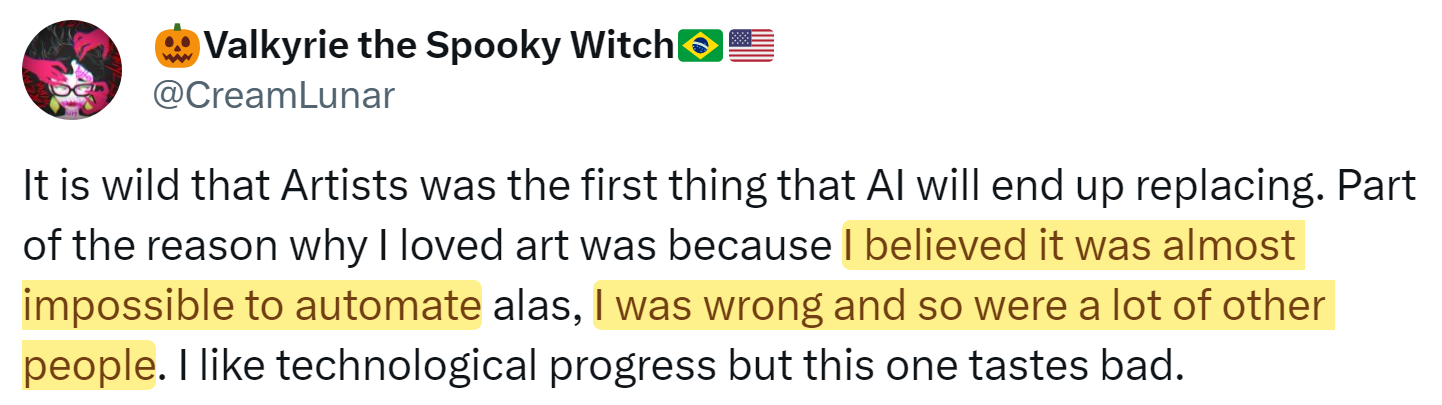}
            \caption{Artist Valkyrie the Spooky Witch's post\supercite{valkyrie2023} \textit{(January 2023)}} 
        \end{minipage}
    \end{subfigure}}
    \caption{ In 2022, many artists were convinced that the threat of their replacement by AI was overhyped, and consequently chose to wait-and-see. But in just one year, AI art made a substantial jump in aesthetic and economic competitiveness. Artists and their advocates who had not accurately panicked early enough were blindsided in 2023, likely resulting in missed opportunities.}\label{fig:artistcomparison}
\end{figure}

 Arguably, the AI art models of 2023 have already become a complete automation threat for digital artists,\supercite{obedkov2023game} especially in the event that the indefinite loss of their income streams---combined with the lack of material support from future victims, who are themselves waiting-and-seeing instead of accurately panicking---consigns the artists' copyright lawsuits and advocacy efforts to failure.

In this paper, we elucidate some causes and consequences of disunity between the current victims and future victims of AI-driven disempowerment, by proposing a game-theoretic model of conflict. The mathematical model is presented in two steps. First, the dynamical-systems part of the model is presented in Section 2. Next, the game-theoretic part of the model is presented in Section 3. In these two sections, we deduce that under realistic parameter assumptions, movement disunity and consequent inefficacy are caused by members' myopia, naivety, collaborationism, defeatism, and complacency. These five concepts are formally defined in the main text.

Then, in Section 4, we present preliminary corroboration of our predictions from the historical-empirical record. We present examples of AI-industry leaders' pronouncements that, if taken at face value, would cause people to be myopic, naive, collaborationist, defeatist, and complacent about the threat of AI-driven disempowerment. We also present historical examples of divide-and-conquer dynamics in two types of conflicts: corporate influence campaigns and military conflicts.

Finally, in Section 5, we discuss the legal, political, and societal avenues by which the conflict between the pro-human movement and the AI replacement movement is playing out---or likely will play out in the future---and suggest how our model's predictions may be productively applied to increase the unity and efficacy of the pro-human movement's efforts in each of these avenues. 

Readers who are unfamiliar with the knowhow of modeling real-world phenomena with a parsimonious formal model---and deducing from it testable, potentially generalizable predictions via theorem-proving---are encouraged to skim Sections 2 and 3 while reading the bold-font predictions, and to afterwards skip ahead to the discussion in Sections 4 and 5.

\section{The battle model}

In our dynamical-systems model, two sides are engaged in a conflict: the \emph{pro-human movement} seeking to robustly restrict AI to use cases that augment rather than replace humans (more generally, Movement 1), and the \emph{AI replacement movement}---which we also call the \emph{anti-human movement} in some figures for brevity---working to prevent such restrictions in favor of unchecked AI deployment (more generally, Movement 2). Suppose Movement 1 is comprised of $m \ge 1$ parts, and Movement 2 is comprised of $n \ge 1$ parts. The parameters $m$ and $n$ quantify the degree of disunity in each of the two movements. A movement which is divided into many parts is said to be disunited, whereas a movement with just one part is said to be completely united.

We represent the conflict with a generalized gambler's ruin model, with the state space 
\begin{equation}
I\equiv\{0, 1,\ldots, m+n\}.
\end{equation}
Each state corresponds to a battlefield, where the states are unidimensionally arranged from $i=0$ to $i=m+n$. See Figure~\ref{fig:schematic} for a schematic of this in two real-world applications of the model: the fight against AI-driven disempowerment and the Napoleonic Wars. We define the vector of battle-win probabilities (for Movement 1) by
\begin{equation}
\bm p =(p_1,\ldots, p_{m+n-1}),
\end{equation}
and assume that $0 < p_i < 1$ for all $i$.

The dynamics of the generalized gambler's ruin is as follows. The state is first set to $i=n$. Then, a battle occurs, in which Movement 1 wins with probability $p_i$ (given by a function of the model parameters, to be specified later) and Movement 2 wins with probability $1-p_i$. If Movement 1 wins, then the state is decreased by one. If Movement 2 wins, then the state is increased by one. Battles continue to occur sequentially until either the state $i=0$ (Movement 1's victory) or the state $i=m+n$ (Movement 2's victory) is achieved. 

\begin{figure}[h]
\centering

\begin{flushleft}
\textbf{(a)}
\end{flushleft}

\includegraphics[width=0.95\textwidth]{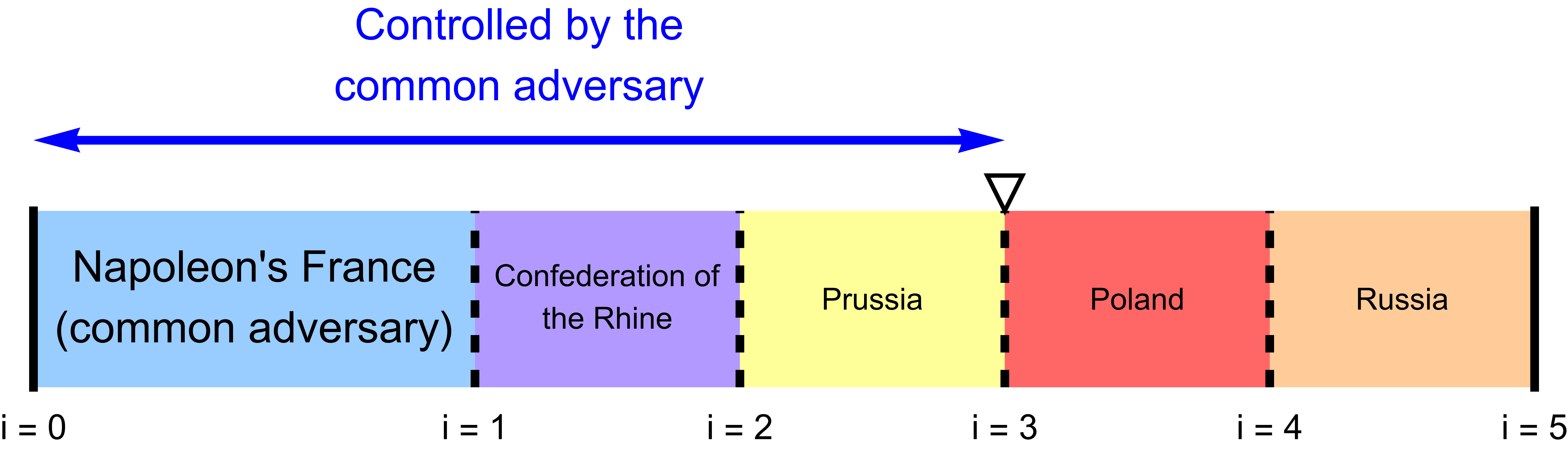}

\hspace{1pt}

\hspace{1pt}

\hspace{1pt}

\begin{flushleft}
\textbf{(b)}
\end{flushleft}

\includegraphics[width=0.95\textwidth]{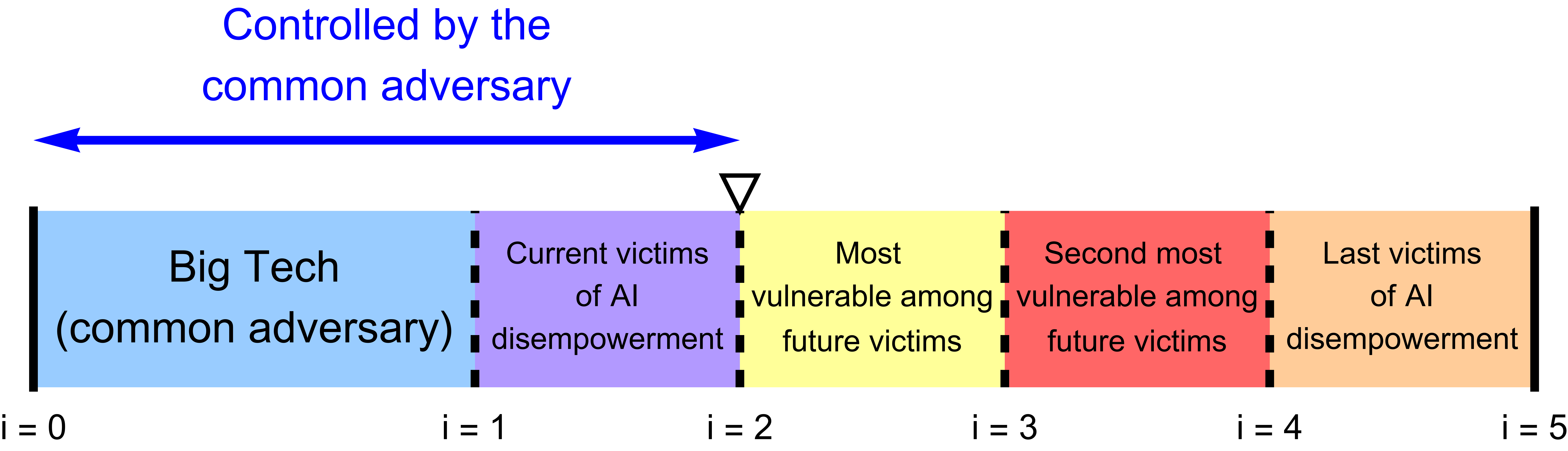}

\hspace{1pt}

\caption{\textbf{(a)}  Schematic of the conflict between Napoleonic France and the coalition of powers in opposition, as represented by our generalized gambler's ruin model. The current battlefield is $i=3$, that between Prussia and Poland. \textbf{(b)}  Schematic of the conflict between Big Tech and the members of the pro-human movement, as similarly represented by our generalized gambler's ruin model. 
The current battlefield is $i=2$, that between `current victims of AI disempowerment' and `most vulnerable among future victims.'}\label{fig:schematic}

\end{figure}

This game thus corresponds to a Markov chain with transition probability matrix
\begin{equation}
\bm{P} = \begin{pmatrix}
1 & p_1 & 0 & \cdots & 0&0&0 \\
0 & 0 & p_2 &\cdots & 0&0&0 \\
0 & 1-p_1 & 0 &\cdots & 0&0&0 \\
0 & 0 & 1-p_2 &\cdots & 0&0&0 \\
\vdots  & \vdots  & \vdots & \ddots  & \vdots& \vdots & \vdots   \\
0 & 0 & 0 & \cdots & p_{m+n-2} &0&0 \\
0 & 0 & 0 & \cdots & 0&p_{m+n-1}&0 \\
0 & 0 & 0 &\cdots & 1-p_{m+n-2}&0&0\\
0 & 0 & 0 & \cdots &0&1-p_{m+n-1} &1
\end{pmatrix}.
\end{equation}

The initial probability distribution $\bm v^0$ is given by the column vector\begin{equation}\bm v^0(s)= \begin{cases}1, & \text{ if } i=n, \\
0, & \text{ if } i \neq n,
\end{cases}\end{equation}
and the probability distribution at time step $\ell$ is given by 
$\bm v^\ell =  \bm{P}^\ell\bm v^0$. With probability one, the Markov chain arrives in a finite number of time steps at one of the two absorbing states, $i=0$ or $i=m+n$. We define $q$ as the probability of arriving at $i=0$ first, which we call the  (overall) win probability of Movement 1.

\pagebreak
\newpage

\pagebreak
\newpage

\subsection{Models for the battle-win probability $p_i$}

We first consider the simple case in which Movement 1's battle-win probability, $p_i$, does not depend on the state $i$, i.e., $p_i=p$ for all $i\in I$. Then, the win probability of Movement 1, $q$, can be explicitly calculated \supercite[Section 12.2]{CharlesGrinstead2022IP} as

\begin{equation}\label{qformula}
q=\begin{cases}\displaystyle\frac{1 - \left(\frac{1 - p}{p}\right)^m}{1 - \left(\frac{1 - p}{p}\right)^{m + n}} & \text{ if } p\neq \frac{1}{2}, \\
\displaystyle {m\over m+n} & \text{ if } p=\frac{1}{2}.
\end{cases}
\end{equation}
We will explore various models for the probability values $p_i$. 
Suppose Movement 1 has a total force strength $F$, split into $n$ pieces; and Movement 2 has a total force strength $sF$, split into $m$ pieces. Here, the strength parameter $s$ represents the relative strength advantage of Movement 2, and the abstraction of ``force strength'' aggregates a variety of resources (e.g., human capital; material resources; and less tangible types of resources like engagement opportunities, public support, and morale) that help one side over another in a battle. 

We first consider the simple case where each side's battle win probability ($p$ and $1-p$) is proportional to its force strength values $F/n$ and $sF/m$, respectively. This yields the relation
\begin{equation}
\frac{p}{1-p} 
= \frac{\frac{F}{m}}{\frac{sF}{n}}
= \frac{n}{sm}.
\end{equation}
Solving for $p$ gives 
\begin{equation}\label{functionalformp}
 p_i=p=\frac{1}{1+sm/n}.
\end{equation}
Substituting (\ref{functionalformp}) into the corresponding expression for $q$ in (\ref{qformula}) now gives the overall win probability of Movement 1 as
\begin{equation}\label{qstandardformula}
q=\frac{1-\left(\frac{sm}{n}\right)^m}{1-\left(\frac{sm}{n}\right)^{m+n}}
\end{equation}
if $s\ne n/m$, and $q=m/(m+n)$ otherwise. Thus, Movement 1's overall win probability is determined by the three variables $m, n, $ and $s$. See Figure~\ref{fig:example} for examples of this.

\begin{figure}[h]
\centering 
\begin{flushleft}
\hspace{60pt}\textbf{(a)}
\end{flushleft}

\includegraphics[width=0.48\textwidth]{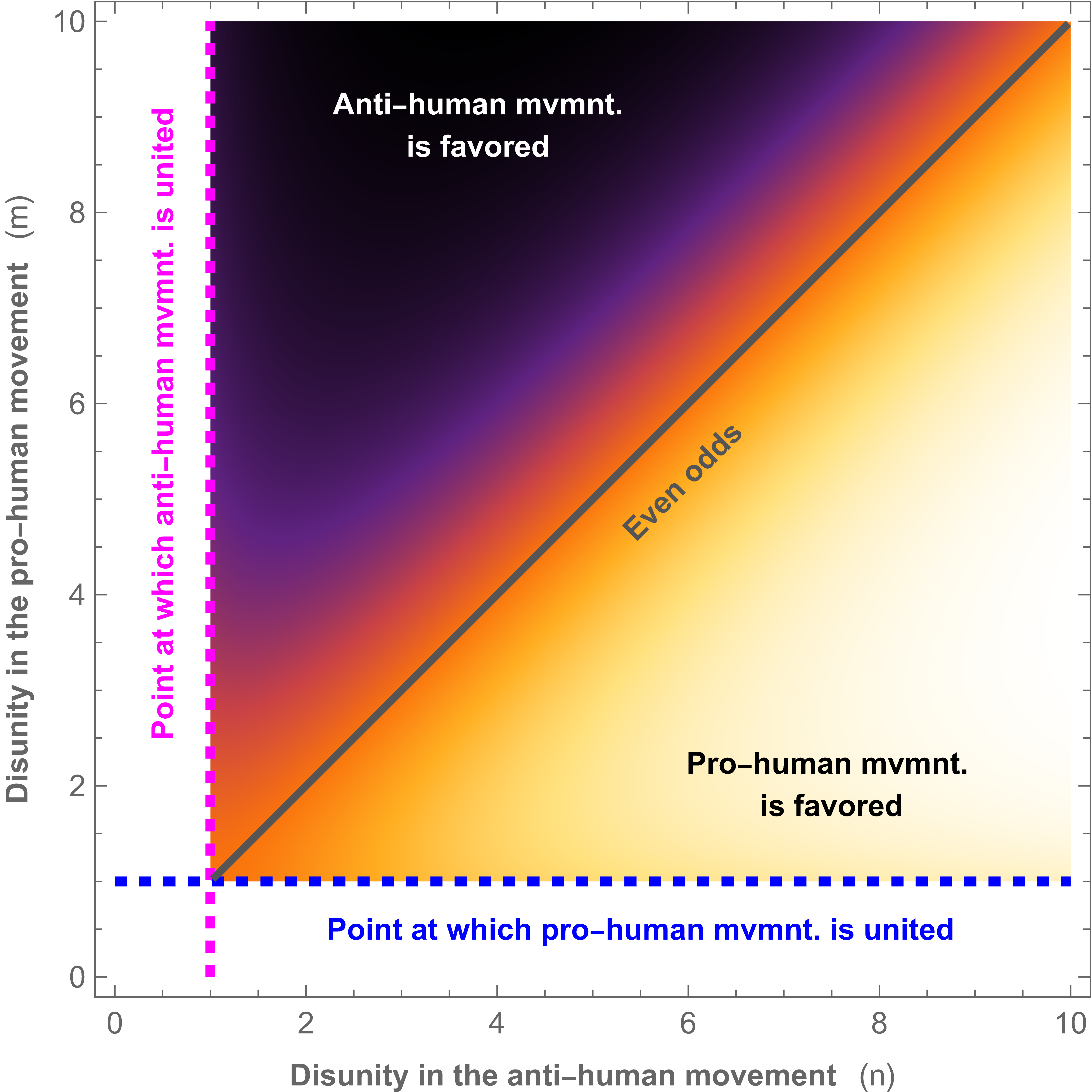} \includegraphics[width=0.14\textwidth]{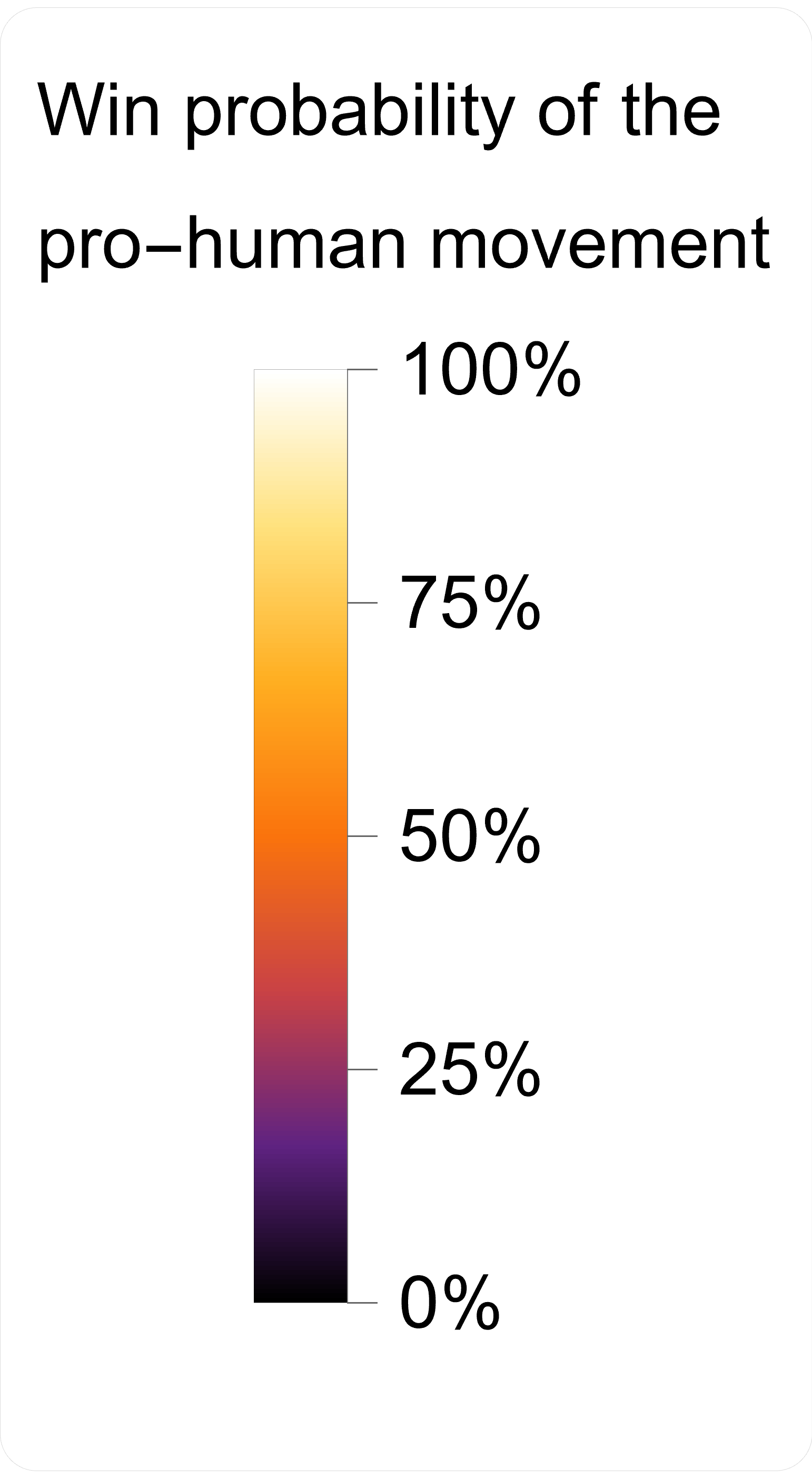}

\hspace{1pt}

\begin{flushleft}
\hspace{60pt}\textbf{(b)}
\end{flushleft}

\hspace{-16pt} \includegraphics[width=0.52\textwidth] {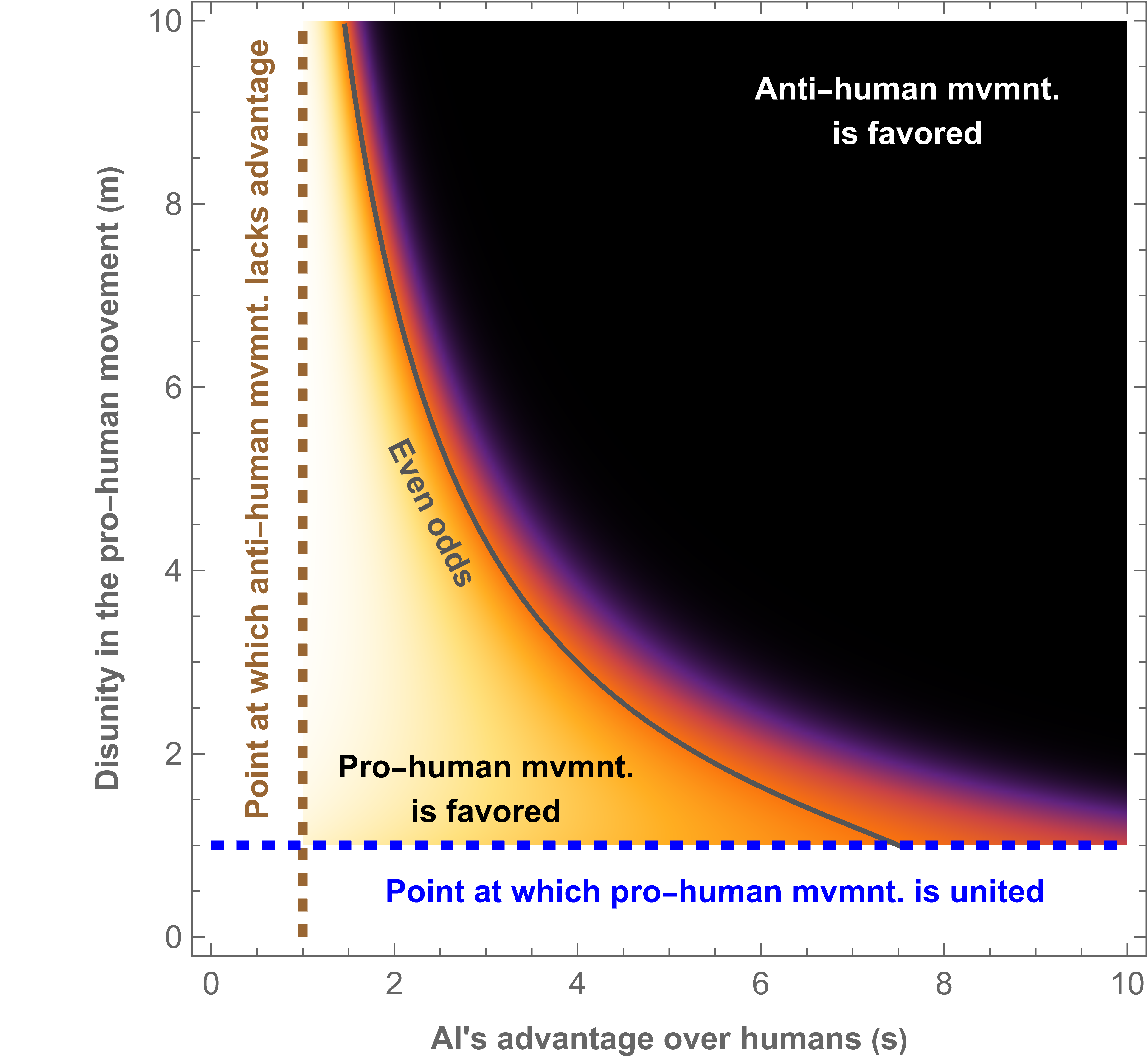}    \includegraphics[width=0.14\textwidth]{divideandconquerdynamicsfig3legend.png}    

\caption{\textbf{(a)}  Win probability of the pro-human movement as a function of the number $m$ of parts in the pro-human movement and the number $n$ of parts in the anti-human/AI replacement movement, for relative strength $s=1$. \textbf{(b)} The win probability of the pro-human movement as a function of its number of parts $m$, assuming the anti-human/AI replacement movement is comprised of $n=15$ parts. This illustrates that when $s$ is large---when the AI replacement movement is strong---full unity ($m=1$) is the optimal counterplay.
}\label{fig:example}
\end{figure}

We generalize this simple model (\ref{functionalformp}) to
\begin{equation}\label{firstgeneralform}
p = \frac{1}{1+\left(sm/n\right)^{1/R}},
\end{equation}
where $R$ denotes the \emph{randomness parameter}. When $R$ is small ($R \ll 1$), a slightly larger part has a comparatively much larger probability of winning the battle. When $R$ is large ($R\gg 1$), even a much larger part only has slightly higher than even odds of winning the battle. In particular, $R \to 0$ yields the case of deterministic outcomes, in which the stronger side always wins; and $R \to \infty$ yields the case of complete randomness, in which both sides with with probability $1/2$. Our simple model (\ref{functionalformp}) corresponds to the case $R=1$.

We further generalize our model (\ref{functionalformp}) to incorporate the spectrum between a defender's advantage and an attacker's advantage. Consider the generalization
\begin{equation}\label{generalform}
p = \sigma\left[\sigma^{-1}\left(\frac{1}{1+\left(sm/n\right)^{1/R}} \right)+\sgn(i-n) \gamma\right].
\end{equation}
Here, $\sigma$ denotes the sigmoid function 
\begin{equation}
\sigma(x) \equiv \frac{1}{1+e^{-x}},
\end{equation} which is a monotonically increasing bijection from $\mathbb{R}$ to $ (0,1)$. Its inverse function 
\begin{equation}
\sigma^{-1}(p) \equiv \log\left(p\over 1-p\right)
\end{equation}
is a monotonically increasing bijection from $(0,1)$ to $\mathbb{R}$. The parameter $\gamma$ is the \emph{attacker's/defender's-advantage parameter}. The case of $\gamma>0$ models a defender's advantage. Battles that are fought in Movement 1's territory tend to be won by the Movement $1$, and vice versa.  The case of $\gamma < 0$ models an attacker's advantage. Battles fought in Movement 1's territory tend to be won by Movement 2, and vice versa. The pre-generalized model (\ref{firstgeneralform}) corresponds to the case $\gamma=0$, where neither the defender's advantage nor the attacker's advantage exists.

\subsection{Prediction: Against a sufficiently large threat, complete unity is the optimal counterplay}


\begin{proposition}\label{prop:unitygood}
For all sufficiently large $s$, Movement $1$'s win probability 
$q$ is maximized for $m=1$ (complete unity) and monotonically decreases to zero as $m\to\infty$. 
\end{proposition}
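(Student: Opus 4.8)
The plan is to work directly with the explicit win-probability formula (\ref{qstandardformula}), treating $n$ as fixed and analyzing $q$ as a function of the integer $m \ge 1$ for large $s$. First I would restrict to $s > 2n$, which guarantees $x := sm/n > 1$ (indeed $x \ge s/n > 2$) for every $m \ge 1$; this keeps us in the nondegenerate branch of (\ref{qstandardformula}) and sidesteps the exceptional case $s = n/m$ entirely. Writing $q(m) = (x^m - 1)/(x^{m+n}-1)$ with $x = sm/n$ and using the identity $x^{m+n} - 1 = x^n(x^m - 1) + (x^n - 1)$, I would recast the win probability as
\begin{equation}
q(m) = \frac{1}{x^n + E(m)}, \qquad E(m) = \frac{x^n - 1}{x^m - 1} > 0,
\end{equation}
which isolates the dominant term $x^n = (sm/n)^n$ from a strictly positive correction $E(m)$.

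The core of the argument is a sharp, $m$-dependent control of $E(m)$. I would bound it by splitting into the case $m \ge n$, where $E(m) \le 1$ because $x^n - 1 \le x^m - 1$, and the case $1 \le m < n$, where $x^m - 1 \ge x^m/2$ (valid since $x \ge 2$) yields $E(m) < 2x^{n-m} \le 2x^{n-1}$. In both cases this gives the uniform relative bound $E(m)/x^n < 2/x = 2n/(sm)$, hence the sandwich
\begin{equation}
\frac{1}{x^n}\cdot\frac{1}{1 + 2n/(sm)} < q(m) < \frac{1}{x^n} = \left(\frac{n}{sm}\right)^n.
\end{equation}
The crucial feature is that the relative error decays like $1/x \sim 1/m$, i.e. at exactly the scale of the spacing between consecutive values of $q$; a cruder estimate with an $s$-dependent but $m$-independent relative error would be too lossy to survive as $m \to \infty$.

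Finally I would establish strict monotonicity by comparing consecutive integers. Feeding the upper bound for $q(m+1)$ and the lower bound for $q(m)$ into the target inequality $q(m+1) < q(m)$ reduces everything to showing $1 + 2n/(sm) \le (1 + 1/m)^n$, which follows from the Bernoulli inequality $(1+1/m)^n \ge 1 + n/m$ together with $s > 2$ (implied by $s > 2n$). This yields $q(m+1) < q(m)$ for all $m \ge 1$, so $q$ is maximized at $m = 1$, and the limit $q(m) \to 0$ is immediate from $q(m) < (n/(sm))^n$. I expect the main obstacle to be precisely the error analysis of the middle step: one must resist the temptation of a uniform relative-error estimate and instead track the $m$-dependence of $E(m)/x^n$ carefully enough that the Bernoulli gap $n/m$ dominates it, which is exactly what pins down the explicit threshold $s > 2n$ witnessing ``sufficiently large $s$.''
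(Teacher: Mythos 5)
Your argument is correct for the model it addresses, and it takes a genuinely different, more elementary route than the paper's. You work entirely with the closed form (\ref{qstandardformula}), rewrite $q(m)=1/(x^n+E(m))$ with $x=sm/n$ and $E(m)=(x^n-1)/(x^m-1)$, bound the relative correction by $E(m)/x^n\le 2/x=2n/(sm)$ (the case split $m\ge n$ versus $m<n$ checks out, using $x>2$), and then beat that error with the Bernoulli gap $(1+1/m)^n\ge 1+n/m$, giving the chain $q(m+1)<x_{m+1}^{-n}\le\bigl(x_m^n(1+2n/(sm))\bigr)^{-1}<q(m)$ whenever $s>2n$, plus $q(m)<(n/(sm))^n\to 0$. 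All steps hold. The paper instead computes $q$ for the fully general battle-win model, treats $m$ as a continuous variable, and shows $\partial q/\partial m<0$ for large $s$ by exhibiting one dominant negative term in the derivative's numerator that outweighs six positive terms, with a compactness argument (Lemma~\ref{lem}) to get uniformity over $m$. Your approach buys an explicit threshold $s>2n$, genuinely discrete (integer) monotonicity---which is what the statement about a number of parts actually needs---and far less bookkeeping.

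The one genuine limitation is scope. The proposition appears immediately after the general model (\ref{generalform}) is introduced, and the paper's appendix proof is explicitly for that model, with arbitrary randomness parameter $R$ and attacker's/defender's-advantage parameter $\gamma$; your proof covers only the base case $R=1$, $\gamma=0$, which is exactly (\ref{qstandardformula}). The extension is not automatic. For general $R$ the relevant base becomes $x=(sm/n)^{1/R}$, so your uniform relative-error bound $2/x$ decays like $m^{-1/R}$, which for $R>1$ is slower than the $1/m$ spacing scale, and the Bernoulli exponent becomes $n/R$, which can lie below $1$, where the inequality $(1+1/m)^{n/R}\ge 1+(n/R)/m$ you rely on reverses; one would instead need the sharper bound $E(m)/x^n\le 2x^{-m}$ together with a large-$m$/compactness argument in the spirit of the paper's lemma. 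For $\gamma\neq 0$ the win probability is no longer given by a single gambler's-ruin formula at all (the paper splits the chain into two coupled gambler's ruins), so a separate computation of $q$ must precede any such estimate. Since Section 3 of the paper only ever uses $q(m)$ from (\ref{qstandardformula}), your proof does support everything downstream, but as a proof of the proposition in the generality the paper intends, it establishes a strictly special case.
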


\begin{proof}
See Section~\ref{appendix:serious} of the Appendix.
\end{proof}

In other words, it is when the threat that Movement 2 poses to Movement 1 is highly serious---when the value of $s$ is high---that complete unity becomes the strategy that maximizes Movement 1's overall win probability $q$. The qualitative interpretation is that if the AI replacement movement has a large advantage relative to the pro-human movement in battles---for example, if AI actually does have a large capacity to replace many kinds of livelihoods---then the pro-human movement may need to achieve complete unity in order to have a good shot at prevailing. See Figure~\ref{fig:threat} for an example.

\begin{figure}[H]

\centering

\begin{center}
\includegraphics[width=0.6\textwidth]{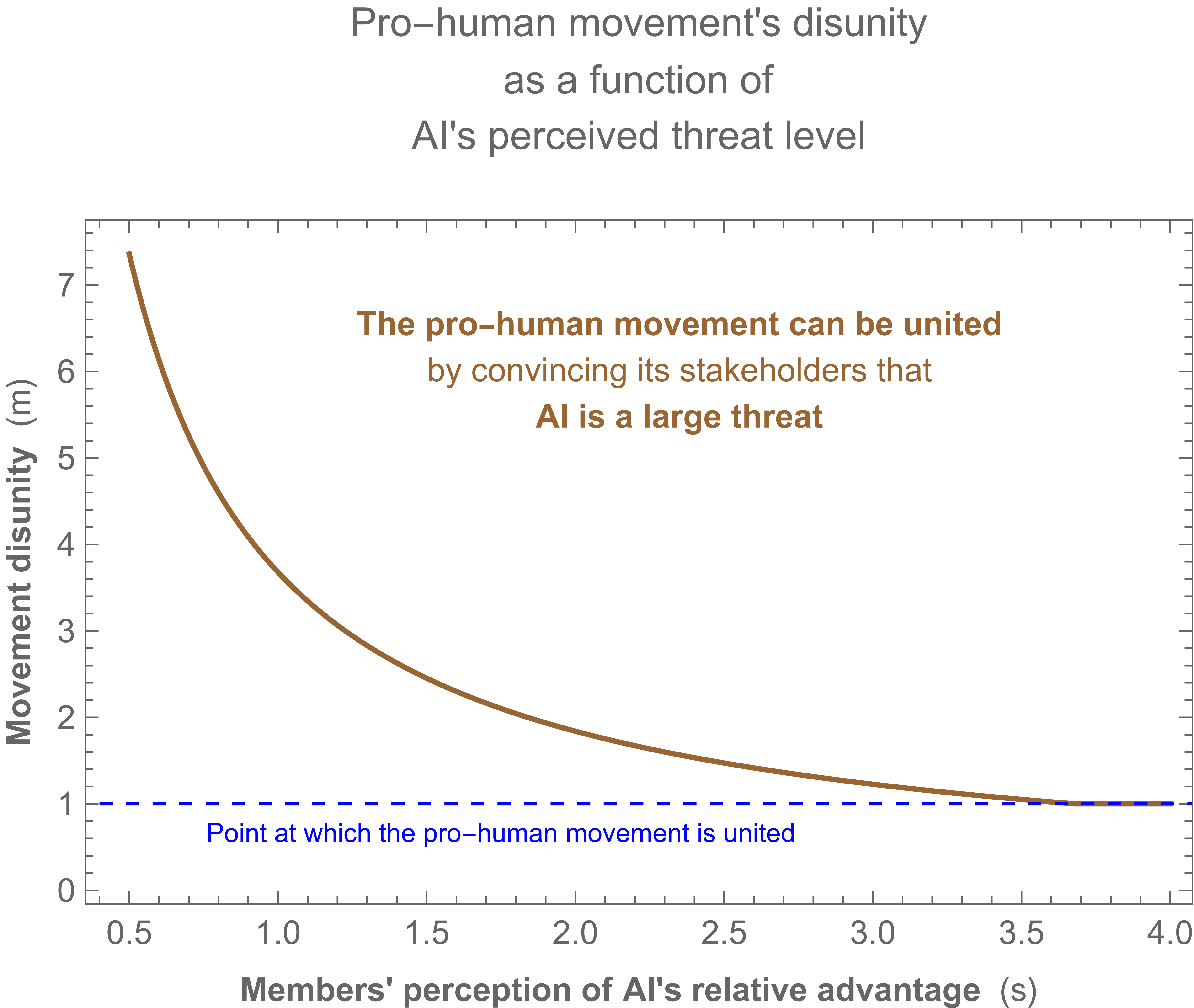}
\end{center}

\caption{ The pro-human movement's disunity as a function of members' perception of the AI replacement movement's strength parameter $s$, as $s$ is varied from $s=0.5$ to $s=4.0$ and the number of the  AI replacement movement's parts $n$ is set to $n=10$.  }\label{fig:threat}

\end{figure}

\section{Game theory: Cooperate or not?}

In the previous section, we have described our battle model. It predicts the overall win probability for each side, conditional on the two movements choosing to split into $m$ and $n$ parts, respectively. We will now describe our game-theoretic modification of this base model, which rigorously defines how each movement and its constituents decide the extent to which they cooperate, which ultimately determines their overall win probability. 

For brevity, let us write the win probability of Movement 1 from (\ref{qstandardformula}) as the function $q(m)$. In the previous section, we saw that for large enough $s$, the function 
$q(m)$ is monotonically decreasing, meaning that 
complete unity is the optimal strategy. If Movement $1$ is split into $m_0$ parts, will it therefore
benefit from unifying into $m_1$ parts, where $m_1<m_0$?
We assume that coalition-forming is voluntary, so that 
such a new coalition of $m_1$ members will only form if all 
its $M$ members agree.

We assume that there is an extra up-front cost $c> 0$ that each coalition member needs to pay if and only if they agree on this new larger coalition. We also assume that 
if Movement 1 wins, each of its members gains a periodic 
benefit $b$ for all future time periods $\ell$, which (using an exponential discounting of $e^{-r\ell}$, 
as is standard in economics) produces a total benefit of
$\sum_{\ell=1}^\infty b e^{-r\ell} = b/(e^r-1)$.
We allow for $b$ to be either positive, indicating that the member benefits from their movement prevailing; or negative, indicating that the member is in fact harmed by their movement prevailing.

In summary, the total payoff for each member of Movement 1
is given by the random variable
\begin{equation}
\Pi=\begin{cases}
\frac{b}{e^r - 1}, & \text{ if Movement $1$ adopted the status quo and won,} \\
0, & \text{ if Movement 1 adopted the status quo and lost,} \\
-c + \frac{b}{e^r - 1} & \text{ if Movement $1$ adopted greater unity and won,} \\
-c, & \text{ if Movement $1$ adopted greater unity and lost.}
\end{cases}
\end{equation}

Conditional on Movement $1$ adopting the status-quo approach $m=m_0$, its win probability is $q(m_0)$, and the expected payoff of each member of Movement 1 is thus given by
\begin{align}
\nonumber\langle\Pi(m_0)\rangle&=q(m_0)\frac{b}{e^{r}-1}  + \left(1-q(m_0)\right)\cdot 0\\&=q(m_0)\frac{b}{e^{r}-1}.\label{spayoff}
\end{align}
Conditional on Movement $1$ adopting the greater-unity approach $m=m_1$, its win probability is $q(m_1)$, and the expected payoff of each member is 
\begin{align}
\langle\Pi(m_1)\rangle &=  q(m_1)\left(-c + \frac{b}{e^{r}-1}  \right) +  \left(1-q(m_1)\right)\left(-c\right)\nonumber    \\&= -c+ q(m_1) \frac{b}{e^{r}-1}.\label{gpayoff}
\end{align}

We assume that each member is rational, in that when they choose their action in the set $A=\{S, G\}$, where $S$ denotes `voting for the status quo' and $G$ denotes  `voting for greater unity,' they are assumed to maximize their expected payoff. This motivates the definition of a \textit{(pure) Nash equilibrium}, \supercite{nash1950equilibrium} a vector of the players' actions 
\begin{equation}
(a_1,\ldots, a_{M}) \in A^{M}
\end{equation}
for which no player can deviate to the other of the two actions to gain a higher expected payoff. We follow the \textit{modus operandi} of game theory by investigating the conditions under which various outcomes are predicted as Nash equilibria of the underlying model.

Embedded in the above model is the assumption that each of the $M$ players is rational, in that they maximize their expected payoff.  The assumption of rationality is standard in game theory, and often gets grouped in with the stronger assumption of rationality that the notion of payoff that each player maximizes is an accurate reflection of their utility or well-being. This stronger assumption is not always realistic, however. In certain parts of our paper, we will relax this assumption by assuming that a given player can maximize (in expectation) a notion of utility that is incorrect, because the player is mistaken about the true value of a model parameter that is crucial to accurately predicting their actual utility. In such situation, we will clearly denote the difference between the \textit{perceived} but mistaken value of the model parameter according to the given player's belief, and the \textit{actual} value of the model parameter.

Throughout the remainder of this paper, we assume the hypothesis of Proposition~\ref{prop:unitygood}: that the strength parameter $s$ of the AI replacement movement over the pro-human movement is high enough that regardless of whether the former adopts the status-quo approach or the greater unity approach, the pro-human movement's win probability is monotonically decreasing. This assumption is realistic, given the large amount of resources currently available to the AI replacement movement \supercite{metinko2023aifunding} and empirical scaling laws which suggest that AI capabilities may continue to steadily increase. \supercite{hoffmann2022training} Under the realistic assumption of a high strength parameter $s$, greater unity directly translates to a greater win probability for the pro-human movement, a corollary that is also implicitly assumed in the rest of the paper.

\subsection{Factors that determine whether members will unite in solidarity or wait-and-see}

Recall that a member of the pro-human movement (Movement $1$) is faced with the choice of whether to vote for the status-quo approach of dividing into $m_0$ parts, or for the greater-unity approach of dividing into $m_1<m_0$ parts. The greater-unity approach requires each member to pay a cost of $c$. In return, the win probability of the pro-human movement changes from $q(m_0)$ to $q(m_1)$. Under our assumption that the strength parameter $s$ is sufficiently high, we have that $q$ is monotonically decreasing in $m$, and in particular, we have $q(m_1)>q(m_0)$. 

We assume that switching to the greater-unity approach requires unanimous support, in that any member of Movement 1 can veto it. Thus, it is meaningful to investigate the question of whether the outcome of a unanimous support of the greater-unity approach, $(a_1,\ldots, a_M)=(G,\ldots,G)$, is a Nash equilibrium. Colloquially, this constitutes the study of whether any member has an incentive to defect from this outcome, by choosing instead to veto the greater-unity approach so as to not pay their share $c$ of the cost.

A given member's decision is determined by whether their expected payoff conditional on the greater-unity approach  exceeds or is less than their expected payoff conditional on the status quo approach. The given member will choose to defect from the greater-unity approach if and only if 
\begin{equation}
\langle\Pi(m_1)\rangle  < \langle\Pi(m_0)\rangle.
\end{equation}
Substituting (\ref{gpayoff}) for the left-hand side and (\ref{spayoff}) for the right-hand side, we find that the given member will choose to defect from the greater-unity approach if and only if 
\begin{equation}
\left(\frac{ 1 }{e^{r }- 1 } \right)(b/c) \left(q(m_1) - q(m_0)\right)< 1.
\end{equation}

Further substituting in the expression (\ref{qstandardformula}), we obtain that the given member will choose to defect from the greater-unity approach if and only if the following key inequality holds:
\begin{equation}\label{inequality}
\left(\frac{ 1 }{e^{r }- 1 } \right) (b/c) \left( \frac{1-\left(\frac{\hat sm_1}{n}\right)^{m_1}}{1-\left(\frac{\hat sm_1}{n}\right)^{m_1+n}}- \frac{1-\left(\frac{\hat sm_0}{n}\right)^{m_0}}{1-\left(\frac{\hat sm_0}{n}\right)^{m_0+n}}\right)< 1.
\end{equation}

We thus see that there are three factors that determine whether a given member of the pro-human movement will defect from, and thereby prevent, the greater-unity approach. The first factor, $\frac{ e^{-r } }{1- e^{-r } }$, is determined by the exponential discount parameter $r$. The second factor, $b/c$, is determined by the (perceived) cost $c$ paid by the member if the pro-human movement opts for the greater-unity approach relative to the status-quo approach, as well as the (perceived) benefit $b$ to the member from the pro-human movement's victory relative to the AI replacement movement's victory. The third factor, 
\begin{equation}\label{thirdfactor}
\left( \frac{1-\left(\frac{\hat sm_1}{n}\right)^{m_1}}{1-\left(\frac{\hat sm_1}{n}\right)^{m_1+n}}- \frac{1-\left(\frac{\hat sm_0}{n}\right)^{m_0}}{1-\left(\frac{\hat sm_0}{n}\right)^{m_0+n}}\right),
\end{equation}
is determined by the parameters of (\ref{qstandardformula}) defined in Section 2: $s, n,$ and the values $m_1$ and $m_0$ chosen for $m$. This completes our list of model parameters, which can be found in Table~\ref{tab:variables}.

\begin{table}[h]
\centering
\caption{Variables and the quantity in the model that each of them represents. The variables listed above the bolded line pertain to the dynamical-systems part of our model.  The variables listed below the bolded line pertain to the game-theoretic part of our model.  }\label{tab:variables}

\vspace{7pt}

\begin{tabular}{ | m{5em} | m{10cm}| } 
\hline
\textbf{Variables} & \textbf{What the variable represents}  \\
\hline
   $m$ & Number of parts in the pro-human movement \\ \hline
    $n$ & Number of parts in the AI replacement movement \\ \hline
  $s$ & The strength parameter  \\ \hline
   $R$ & The randomness parameter  \\ \hline
     $\gamma$ & The attacker's/defender's-advantage parameter \\ \hline
   \Xhline{4\arrayrulewidth}
   $ r  $ & The degree to which the given member discounts their future well-being relative to their current well-being \\ \hline
      $c$ & The (perceived) cost paid by the given member if the pro-human movement opts for the greater-unity approach, relative to the status quo approach  \\ \hline
$b$ & The (perceived) benefit to the given member from the pro-human movement's victory, relative to the AI replacement movement' victory \\ \hline
   $\hat s$ & The perceived strength parameter  \\
\hline
\end{tabular}
\end{table}

We formally analyze what real or perceived conditions make it more likely that the inequality (\ref{inequality}) holds, i.e., more likely that members of Movement $1$ find it in their best interest to wait-and-see. The wait-and-see approach constitutes the decision to refrain from contributing a personal cost today, even if it means achieving greater movement unity and thereby a higher probability of prevailing over the opposing movement in the future.

\subsection{Prediction: \textit{Myopic} members tend to wait-and-see, hurting movement unity}

Suppose that the cost value $c>0$ and the benefit value $b>0$ were fixed. Then, the given future victim would choose to defect from the greater-unity approach if and only if their discount factor $r$ were sufficiently low, or in other words, sufficiently close to $0$. Indeed, as $r$ decreases to $0$, the inequality (\ref{inequality}) becomes the contradiction $\infty < c$.
Colloquially, members who prioritize their future well-being as highly as their current well-being are less likely to defect from the greater-unity approach. The reason is that as such patient members are more likely to value the long-term benefits of pro-human victory over the short-term cost they would have to pay in order to achieve that victory. 

In contrast, as $r$ increases to $\infty$, the inequality (\ref{inequality}) converges to the true identity $0<c$. Colloquially, \textit{myopic} members---those who,  relative to their current well-being, discount future well-being at a high exponential rate $r$---are more likely to defect from the greater-unity approach. The reason is that such myopic members are more likely to prioritize the short-term personal cost over the long-term benefits to pro-human victory.

That myopia reduces members' incentive to unite with current victims---and that conversely, prioritizing future well-being as much as current well-being increases members' incentive to unite with current victims---can be seen in Figure~\ref{fig:myopianaivety}.

\begin{figure}[h]

\centering

\begin{center}
\includegraphics[width=0.7\textwidth]{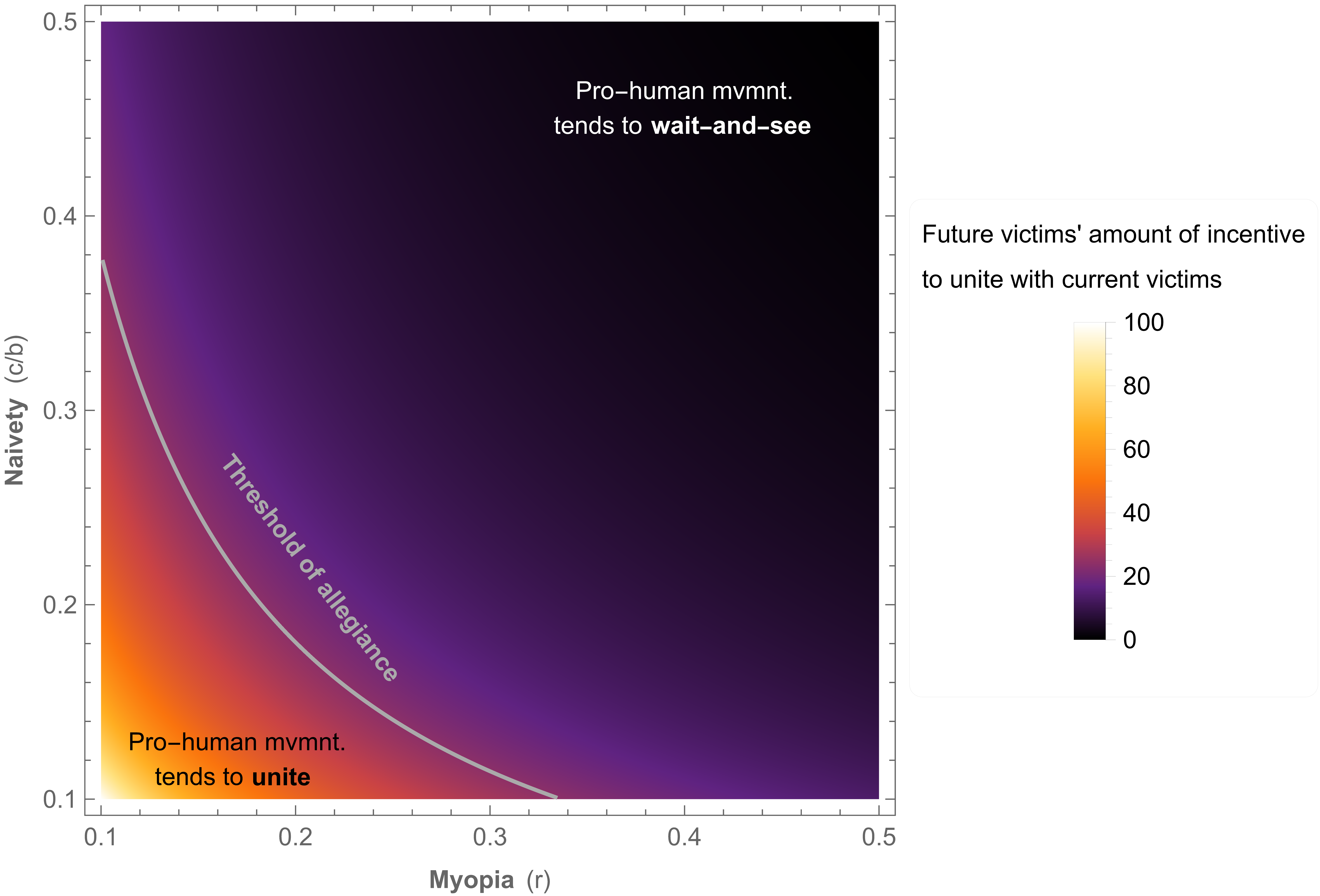}
\end{center}

\caption{The amount of incentive for uniting with current victims, as opposed to waiting-and-seeing. Members who are \emph{myopic} (deprioritize their future well-being compared to present well-being) and \emph{naive} (underestimate the future harm AI replacement poses to their own self-interest, relative to the cost of solidarily supporting current victims today) are more likely to wait-and-see. The threshold between the two preferences is defined for the parameter choices $m_0=2,m_1=1,n=50,$ and $s=10$. }\label{fig:myopianaivety}

\end{figure}

\subsection{Prediction: \textit{Naive} members tend to wait-and-see, hurting movement unity}

Suppose that the discount factor $r$ is fixed. Then, the given future victim would choose to defect from the greater-unity approach if and only if the cost it would chip in to achieve this unity, $c$, were sufficiently large in magnitude compared to the benefit $b$ they would obtain in the event of a pro-human victory. As can be seen in the inequality (\ref{inequality}), the relevant question is whether the ratio $b/c$ is larger or smaller than the threshold set by the inequality.

If the given future victim were to correctly realize that the AI replacement movement's victory would pose a substantial threat to their livelihood, then they would correctly perceive the benefit $b>0$ they would get from a pro-human victory (relative to an AI-replacement victory) to be high relative to the cost $c$. This makes it less likely that the inequality (\ref{inequality}) holds in the opposite direction, and thereby more likely that future victims would support the greater-unity approach rather than defect from it.

Suppose, on the other hand, that the given future victim is \emph{naive}, i.e., they perceive their value of $b/c$ to be small (equivalently, $c/b$ to be high) in that they underestimate the harm that the AI replacement movement posed them relative to the personal cost they would have to pay to prevent that harm. This naivety makes it more likely  that the given future victim believes the inequality (\ref{inequality}) to hold. Consequently, naivety makes it more likely that the future victim would defect from the greater-unity approach rather than support it. This is because they would avoid paying the cost required for the greater unity approach, and instead opt to wait-and-see, to the detriment of movement unity. 

That naivety reduces members' incentive to unite with current victims can be seen in Figure~\ref{fig:myopianaivety}.


\subsection{Prediction: \textit{Collaborationist} members will wait-and-see, hurting movement unity}

Consider the case that a given future victim is \textit{collaborationist}, i.e., they perceive their benefit $b$ from a pro-human victory to be negative, $b<0$. In other words, collaborationists believe that they will benefit from the victory of the AI replacement movement, not that of the pro-human movement. It follows from the definition of collaborationism, $b<0$, that the inequality (\ref{inequality}) holds unconditionally. In fact, even if the cost of the greater-unity approach for the collaborationist were zero, they would still strictly prefer the pro-human movement's loss, and consequently defect to reduce its unity and efficacy. This collaborationist approach can be explained as the maximization of the future victim's perception of their expected payoff, even if their actual expected payoff function was such that sabotaging their own side was in fact unfavorable.

\subsection{Prediction: \textit{Defeatist} members tend to wait-and-see, hurting movement unity}

Suppose that the cost value $c>0$, the benefit value $b>0$, and the discount factor $r>0$ were fixed. However, suppose also that the given future victim perceives the increase in the pro-human movement's win probability, $\left( q(m_1)-q(m_0)\right)$, with error. The expression for the factor (\ref{thirdfactor}) contains four parameters: $s, n, m_0,$ and $m_1$. In this paper, we will consider cases in which the strength parameter $s$ is misperceived as $\hat s \neq s$. Presently, we will write the expression for $q$, (\ref{qstandardformula}), as a function $q(m,s)$ of both $m$ and $s$. 

Suppose that the strength parameter $s$ is misperceived as a very large value of $\hat s$. Observe that for any fixed $m$, we have
\begin{equation}
\lim_{\hat s \to \infty} q(m,\hat s) = \lim_{\hat s \to \infty} \frac{1-\left(\frac{\hat sm}{n}\right)^m}{1-\left(\frac{\hat sm}{n}\right)^{m+n}} = 0.
\end{equation}
It follows that as $\hat s \to \infty$, we have that $q(m_0,\hat s)-q(m_1,\hat s)$ becomes arbitrarily small, since both terms converge to $0$. Thus, for sufficiently large $\hat s$, the left-hand side of the key inequality (\ref{inequality}) becomes sufficiently small, meaning that it is in the perceived self-interest of the given member to defect. It is easy to show that this result also holds for the generalized model (\ref{generalform}) of the battle-win probability, bolstering the case for the result's robustness.

Colloquially, suppose that the given future victim is \emph{defeatist}, i.e., they perceive the strength parameter $\hat s$ to be very large. In other words, a defeatist perceives the strength of the AI replacement movement over the pro-human movement to be so large that the latter's win probability is very close to $0$, regardless of whether it uses the greater-unity approach (which imposes a personal cost to the future victim) or the status quo approach (which does not). It follows that a defeatist would then defect to the status quo approach, which at least does not impose a personal cost on them at the present time.

\subsection{Prediction: \textit{Complacent} members tend to wait-and-see, hurting movement unity}

On the other hand, suppose that the strength parameter $s$ is misperceived as a very small value of $\hat s$. Observe that for any fixed $m$, we have
\begin{equation}
\lim_{\hat s \to 0} q(m,\hat s) = \lim_{\hat s \to 0} \frac{1-\left(\frac{\hat sm}{n}\right)^m}{1-\left(\frac{\hat sm}{n}\right)^{m+n}} = 1.
\end{equation}
It follows that as $\hat s \to 0$, we have that $q(m_0,\hat s)-q(m_1,\hat s)$ becomes arbitrarily small, since both terms converge to $1$. Thus, for sufficiently small $\hat s$, the left-hand side of the key inequality (\ref{inequality}) becomes sufficiently small, meaning that it is in the perceived self-interest of Member $j$ to defect. Again, it is easy to show that this result also holds for the generalized model (\ref{generalform}) of the battle-win probability, bolstering the case for the result's robustness.

A complacent member perceives the disadvantage of the AI replacement movement relative to the pro-human movement to be so substantial that the latter's win probability is very close to $1$, regardless of whether it uses the greater-unity approach (which imposes a personal cost to the future victim) or the status quo approach (which does not). Since victory is close to guaranteed in both scenarios, a complacent member would  defect to the status quo approach, which does not impose a personal cost on them at the present time.  

In other words, complacent future victims tend to unwisely refrain from uniting with current victims to oppose the shared threat, due to their underestimation of the threat. This
unwise decision to forgo unity can reduce the pro-human movement’s probability of prevailing, as illustrated by Figure~\ref{fig:perceivedthreat}.

\begin{figure}[h]
\centering
\includegraphics[width=0.85\textwidth]{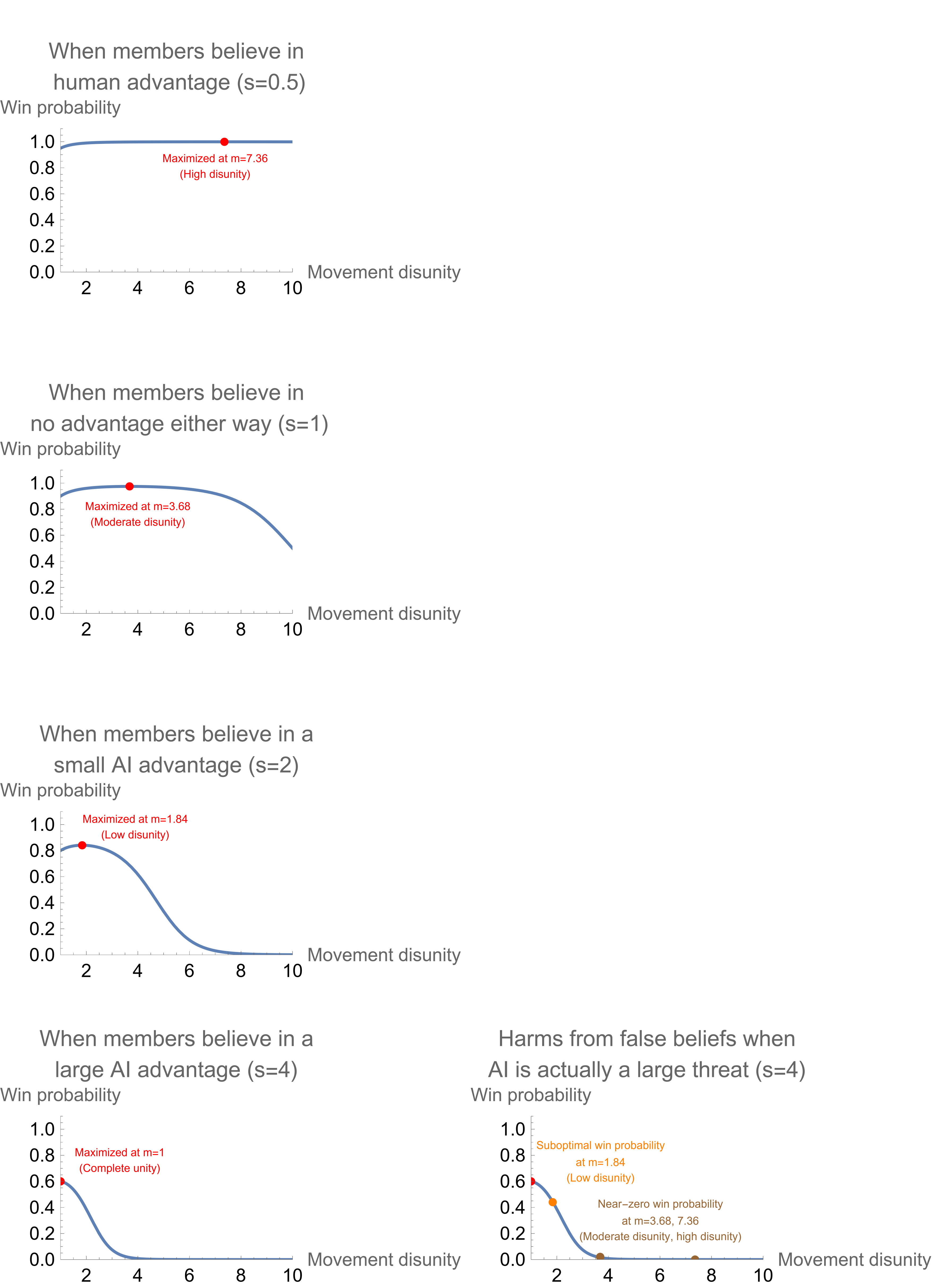}
\caption{The optimal organization structure of the pro-human movement conditional on beliefs about the AI replacement movement's strength parameter $s$, as the number of the AI replacement movement's parts $n$ is set to $n=10$. If the pro-human movement decides their organization structure based on an incorrect belief on how large the AI automation threat really is, the movement's win probability can suffer substantially.}\label{fig:perceivedthreat}
\end{figure}

\section{Corroboration by the historical-empirical record}

The strategic benefits of unity, and the strategic pitfalls of being divided and conquered, are corroborated throughout the historical-empirical record. Empirical corroboration exists for many types of conflicts between human groups, but is especially well-documented for military conflicts, as well as for public-relations conflicts between corporations and their opposition movements. Also contained in the historical-empirical record are illustrative examples of AI-industry leaders'  pronouncements, which we find to be consistent with our model's predictions on what public-relations outcomes a corporate campaign would find helpful for dividing-and-conquering its opposition movement.

\subsection{The pronouncements of AI-industry leaders are consistent with our model's predictions}

 AI-industry leaders have consistently made pronouncements that, if taken at face value by potential members of the pro-human movement, would have the effect of making them more myopic, naive, collaborationist, defeatist, and complacent. We reiterate the definitions of these five concepts below.

First, a movement is more likely to be divided-and-conquered if its members are \emph{myopic}, in that they deprioritize their future well-being relative to their present well-being. Myopia makes members less likely to pay a personal cost today in order to achieve greater movement unity and thereby, a greater probability of collectively prevailing over the AI replacement movement in the future.

Second, a movement is more likely to be divided-and-conquered if its members are \emph{naive}, in that they underestimate the harms that the adversary's takeover would inflict on them. 

Third, a movement is more likely to be divided-and-conquered if its members are \emph{collaborationist}, in that they believe they will actually personally benefit from the adversary taking over.

Fourth, a movement is more likely to be divided-and-conquered if its members are \emph{defeatist}, in that they believe the adversary's takeover is inevitable, regardless of their choice of actions.

Finally, a movement is more likely to be divided-and-conquered if its members are \emph{complacent}, in that they underestimate the degree to which society is vulnerable to  takeover by AI-driven disempowerment, until it is too late.

We present in Table~\ref{tab:quotes} some representative examples of  AI-industry leaders' pronouncements that, if taken at face value, are consistent with inflicting these five weaknesses on potential members of the pro-human movement. It is important to note that these  pronouncements by themselves do not suffice to prove an explicit intent to divide-and-conquer the pro-human movement. Regardless, they bolster the case for further investigation.

\subsection{Divide-and-conquer dynamics in corporate influence campaigns}

Many corporate influence campaigns utilize divide-and-conquer dynamics. By methods such as public relations, discreet lobbying, and the formation of front groups, corporate influence campaigns can  divert negative attention away from the relevant activities or products by fragmenting the opposition movement. \supercite{beder2002global} Below, we outline several noteworthy instances of corporate divide-and-conquer campaigns in the historical-empirical record.

\textit{Keep America Beautiful}

Arguably one of the most infamous corporate divide-and-conquer campaigns is Keep America Beautiful, an anti-littering nonprofit responsible for extremely influential ad campaigns (see Figure~\ref{fig:keepamerica}) en route to becoming the largest community-improvement organization in the United States. \supercite{KeepAmericaBeautiful2023} Despite attracting many well-meaning adherents to this day, Keep America Beautiful was in fact created by the packaging industry and its associates, such as Coca-Cola and the Dixie Cup Company. \supercite{venkatesan2021marketing} The effect of the group's anti-littering campaign was to shift the responsibility for packaging waste to individuals, rather than the corporations that produce it. By framing packaging waste as a problem of individual choice rather than that of corporations' unsustainable packaging,  the corporations effectively divided the environmentalist movement against packaging waste into those who felt the need to clean up packaging waste right now, from those who saw it as a distraction from reducing society's long-run reliance on unsustainable packaging.

\newpage
\pagebreak

\begin{table}[h]
\centering
\caption{AI-industry leaders' words that, if taken at face value, could inflict the weaknesses predicted by our model---myopia, naivety, collaborationism, defeatism, and complacency---on the pro-human movement.  }\label{tab:quotes}
\small

\vspace{9pt}

\begin{tabular}{ | m{7.2em} | m{10.5cm}| } 
\hline
\hspace{12pt} \textbf{Weakness} & \hspace{2pt} \textbf{Words of AI-industry leaders that are consistent with causing the weakness}  \\
\hline
\begin{center}
 \vspace{8pt}
 \textbf{myopia}

 \vspace{4pt}
 
  (high $r $)
  \end{center}
 & 
\vspace{1pt}

\hspace{-3pt}\textit{``I don't know if mundane is the right word, but there are concerns that already exist, about people using AI tools to do harmful things of the type that we’re already aware...}

\vspace{2pt}

\textit{...That’s going to be a pretty big set of challenges that the companies working on this are going to need to grapple with, regardless of whether there is an existential crisis as well sometime down the road.''}
\begin{flushright}
\vspace{-2pt}
- Mark Zuckerberg, CEO of Meta \supercite{fridman2023zuckerberg}
\vspace{-9pt}
\end{flushright}
\\ \hline
\begin{center}
 \vspace{8pt}
 \textbf{naivety}

  \vspace{4pt}

 (low $b/c$)
   \end{center}
 &
\vspace{1pt}

\hspace{-3pt}\textit{``Once AI systems become more intelligent than humans, ... we will \textbf{still} be the 'apex species'...} 

\vspace{2pt}

\textit{...We will design AI to be like the supersmart-but-non-dominating staff member.''}
\begin{flushright}
\vspace{-2pt}
- Yann LeCun, Chief AI Scientist of Meta \supercite{LeCun2023tweet}
\vspace{-9pt}
\end{flushright}
\\ \hline
\begin{center}
 \vspace{8pt}
\textbf{collaborationism} 

 \vspace{4pt}
 
($b<0$)
\end{center}& 
\vspace{1pt}

\hspace{-3pt}\textit{``Unless we built in safeguards, Musk argued, artificial-intelligence-systems might replace humans, making our species irrelevant or even extinct.}

\vspace{1pt}

\textit{Page pushed back. Why would it matter...if machines someday surpassed humans in intelligence, even consciousness? It would simply be the next stage of evolution.''}
\begin{flushright}
\vspace{-2pt}
-  Larry Page, Co-Founder of Google\supercite{isaacson2023inside}
\end{flushright}

\vspace{4pt}

\textit{``F*** AI safety...Me and my robot homies are gonna come to your house.''}
\begin{flushright}
\vspace{-2pt}
-  Martin Shkreli, Co-Founder of DL Software\supercite{Shkreli2023,reuters2023}
\end{flushright}

\vspace{4pt}

\textit{``Rather quickly, they would displace us from existence...It behooves us to give them every advantage, and to bow out when we can no longer contribute...}

\vspace{1pt}

\textit{...I don't think we should fear succession. I think we should not resist it. We should embrace it and prepare for it. Why would we want greater beings, greater AIs, more intelligent beings kept subservient to us?''}
\begin{flushright}
\vspace{-2pt}
- Rich Sutton, \supercite{sutton2023aisuccession} First-Ever Advisor of Google DeepMind \supercite{hassabis}

\vspace{-9pt}
\end{flushright}
\\ \hline
\begin{center}
 \vspace{8pt}
\textbf{defeatism}  

 \vspace{4pt}
 
(very high $\hat s$)
\end{center}
& 
\vspace{1pt}

\hspace{-3pt}\textit{“AI will probably most likely lead to the end of the world, but in the meantime, there'll be great companies.”}
\begin{flushright}
\vspace{-2pt}
- Sam Altman, CEO of OpenAI \supercite{DeVynck2023}
\end{flushright}

\vspace{5pt}

\textit{``AI is the new default to build all tech and we’re here for it!''}
\begin{flushright}
\vspace{-2pt}
- Clem Delangue, CEO of Hugging Face \supercite{Delangue2023}
\end{flushright}

\vspace{4pt}

\textit{“Imagine if everyone of good conscience said, ‘I don’t want to be involved in building A.I. systems at all. Then the only people who would be involved would be the people who ignored that dictum — who are just, like, ‘I’m just going to do whatever I want.’"}
\begin{flushright}
\vspace{-2pt}
- Dario Amodei, CEO of Anthropic \supercite{Amodei2023}
\end{flushright}

\vspace{4pt}

\textit{``The succession to AI is inevitable...Inevitably, eventually, it would become more important in almost all ways than ordinary humans.''}
\begin{flushright}
\vspace{-2pt}
- Rich Sutton, \supercite{sutton2023aisuccession}  First-Ever Advisor of Google DeepMind \supercite{hassabis}
\vspace{-10pt}
\end{flushright}

\\ \hline
\begin{center}
 \vspace{8pt}
\textbf{complacency} 

 \vspace{4pt}

 (very low $\hat s$) 
 \end{center}
& 
\vspace{1pt}
\hspace{-3pt}\textit{``Worrying about AI today is like worrying about overpopulation on Mars.''}
\begin{flushright}
\vspace{-2pt}
- Andrew Ng, Co-Founder of Google Brain, Former Chief Scientist of Baidu \supercite{ng2016worrying}
\end{flushright}

\vspace{4pt}

\textit{“Why debate human-level AI when we can't approach dog-level intelligence yet?”}
\begin{flushright}
\vspace{-2pt}
- Yann LeCun, Chief AI Scientist of Meta \supercite{LeCun2023}
\vspace{-10pt}
\end{flushright}
\\ \hline
\end{tabular}
\end{table}
\newpage
\pagebreak

\begin{figure}[h]
\centering

\includegraphics[width=0.7\textwidth]{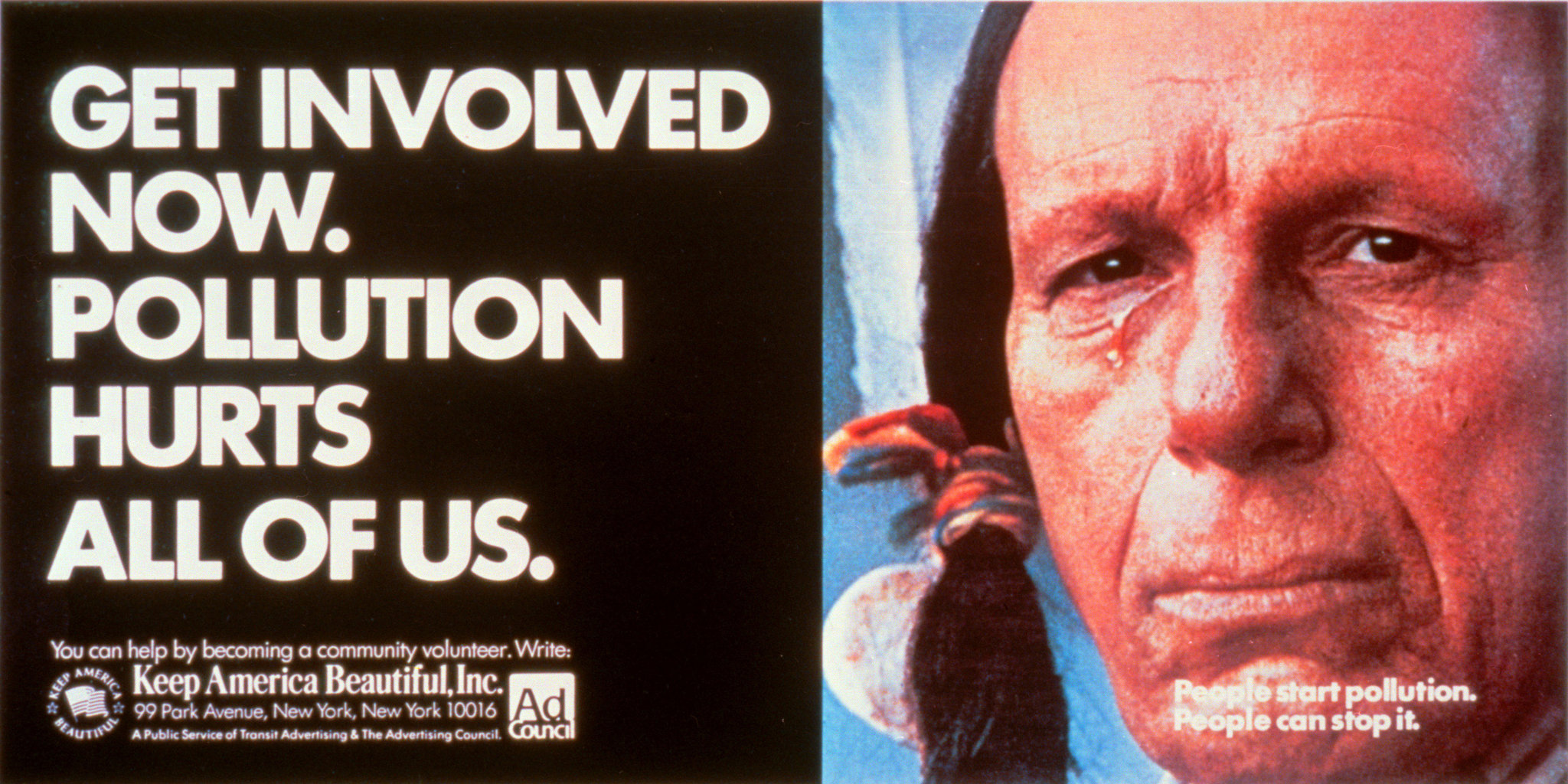}

\caption{Keep America Beautiful ran an extremely influential public service announcement, in which a Native American man sheds a tear as he surveys an environment filled with litter. \supercite{levere2013after}  Messages in the ad campaign like \textit{``Get involved now. Pollution hurts all of us.''} and \textit{``People start pollution. People can stop it.''} divided environmentalists who felt the need to clean up litter right now, from other environmentalists who saw it as a distraction from reducing society's long-run reliance on unsustainable packaging. The campaign was in fact intended to serve corporate interests.}\label{fig:keepamerica}

\end{figure}

\textit{Council for Solid Waste Solutions}

Another example of a corporate divide-and-conquer strategy is provided by the Council for Solid Waste Solutions, a special group that promoted the recycling of plastic. The group was instrumental to establishing the modern plastic recycling system. In actuality, the group was created by the Society of Plastics Industry---comprised of plastic-industry corporations like Exxon, Mobil, Dow, DuPont, Chevron, and Phillips 66---to divert attention away from the fundamentally waste-generating nature of its products. \supercite{bonta2023plastics} Indeed, to this day, well-meaning adherents are misled into believing that plastic is recyclable and therefore sustainable. This is especially well-exemplified by the universal recycling symbol, the ubiquitous triangle of arrows that the plastic industry quietly lobbied most U.S. states to mandate on all plastic, including the many variants that are non-recyclable. \supercite{sullivan2020bigoil} In actuality, less than 10\% of plastic is successfully recycled, \supercite{EPA2023} and commissioned scientists had warned plastic executives as early as 1973 that recycling plastic is ``costly'' and that sorting it is ``infeasible.'' \supercite{sullivan2020bigoil} The plastic industry successfully divided the environmentalist movement against plastic waste into those who felt the need to recycle plastic now versus those who saw it as a distraction from reducing society's long-run reliance on plastic-waste-producing products.

\subsection{Divide-and-conquer dynamics in military conflicts}

 Throughout human history, the phenomenon of divide-and-conquer---often referred to as \emph{defeat in detail} in military terminology---has been one of the most important considerations for how to wage war effectively. \supercite{erickson2003defeat} We discuss three representative examples below.

\textit{Napoleon's first Italian campaign (1796-97)}

Napoleon's unusually high rate of military victory is often attributed to his ability ``to overcome a superior enemy by dividing their forces.'' \supercite[p. 3]{durham2015command} This is especially well-demonstrated by his first Italian campaign during the War of the First Coalition. In it, Napoleon adeptly employed speed, maneuverability, and central positioning of his 38,000 troops to divide the defending forces, comprised of 52,000 Piedmontese and Austrian forces. \supercite{TheFrenchWayofWar} By defeating the former into surrendering and subsequently vanquishing the latter, Napoleon was able to achieve a complete victory over a numerically superior adversary.

\textit{The First Balkan War (1912-13)}

Divide-and-conquer dynamics were of pivotal significance in the First Balkan War. This is well-explained by the following preamble quote from Edward Erickson~\supercite[p. xvii]{erickson2003defeat}:

\begin{quote}
"The Ottoman Empire engaged in the Balkan Wars of 1912–1913 against the Balkan League (composed of
Bulgaria, Greece, Montenegro, and Serbia) and was decisively defeated in detail. In short summary, the Ottoman Empire split its field armies into groups and thereby created the conditions necessary for its enemies to achieve numerical advantage on the battlefield...

...The Ottoman armies were then defeated in succession."
\end{quote}

Considered as one of the world's great powers before the First Balkan War, the Ottomon Empire's surprising defeat led to the disastrous outcome of ceding almost all of its Balkan territories.

\textit{Operation COMPASS during World War II (1941)}

In the North African front of World War II, 150,000 Italian troops in Italy-controlled Libya threatened British-administered Egypt, erecting makeshift defenses around Sidi Barrani against a comparatively small British force of just 36,000 troops. \supercite{playfair2004mediterranean} Nevertheless, the small but ambitious British force responded with Operation COMPASS, during which they consistently outmaneuvered the comparatively low-mobility Italian forces. Despite their four-to-one numerical advantage, the Italian forces were routed into surrender by the British, with 130,000 of their troops taken prisoner. \supercite{australian} Because the Italians' defensive positions were far away from each other relative to their mobility, these defensive positions were unable to effectively support each other, enabling a British strategy of divide-and-conquer. The frantic and fragmented retreat by the Italian forces afterwards only made them further vulnerable to being defeated in detail. 
Synergistic cooperation between nearby allies constitutes an important mechanism by which superior unity can enable one side of a conflict to prevail against the other.

\section{Discussion}

Like military conflicts, humanity's conflict against AI-driven disempowerment can also be decomposed into a collection of ``battlefields'' spanning  the legal, political, and social spheres. While these ``battlefields'' are not equivalent to military battlefields, they may be analogous as far as divide-and-conquer dynamics are concerned. For instance, just as military units are generally made more effective by the cooperation of supporting military units, stakeholders of the pro-human movement may become more effective at fighting AI-driven disempowerment if they were also supported by other like-minded stakeholders. Moreover, just as military units with different skillsets can cooperation synergistically, so to can the different stakeholders of the pro-human movement, whose sum may be greater than the parts in isolation. We discuss some current and potential future examples below.

\subsection{Moral critique of AI collaborationists}

Some individuals adopt a collaborationist approach to AI. A particularly egregious example is provided by Rich Sutton, the first-ever advisor to Google DeepMind.\supercite{hassabis} In his talk at the World Artificial Intelligence Conference in Shanghai---titled \emph{AI succession} \supercite{sutton2023aisuccession}---Sutton said:

\begin{quote}
``Rather quickly, they would displace us from existence...It behooves us to give them every advantage, and to bow out when we can no longer contribute...

...I don't think we should fear succession. I think we should not resist it. We should embrace it and prepare for it. Why would we want greater beings, greater AIs, more intelligent beings kept subservient to us?''

\end{quote}

Instead of being shunned for his proposal, Sutton was given a partnership with Keen Technologies to build autonomous AI systems that outperform humans at most economically relevant capabilities. \supercite{alberta2023}

For a less egregious, but more widespread example of AI collaborationism, consider the AI specialists who work to replace other people's livelihoods en route to making large amounts of money. Annual salaries of \$900,000 USD for such AI specialists have been offered by companies ranging from OpenAI \supercite{levels_fyi_2023} to Netflix.\supercite{cutter2023_900000AIJob} 

As the shared threat of AI-driven disempowerment becomes more apparent, however, such AI collaborationists may become increasingly subject to moral critique. For a successful example of such a moral critique, consider the incident of Prosecraft, a platform constructed by developer Benji Smith for the linguistic analysis of literature. Using an algorithm to draw from a massive repository of 25,000 books, Smith employed AI to obtain analytics about the books, such as word counts and adverb usage metrics. \supercite{mather2023prosecraft} While he believed these insights would help inform authors, many of them were instead incensed upon discovering that their works had been fed into Prosecraft without their consent. The ensuing uproar by authors convinced Smith to shut Prosecraft down.

 Consistent with our model of divide-and-conquer dynamics, the AI replacement movement reacted to this by using pity for Smith to negatively portray writers' uproar about Prosecraft.  To illustrate, Yann LeCun---Meta's Chief AI Scientist---attempted to martyr Smith by calling him ``a civilian casualty of the anti-AI crusade.'' \supercite{lecun2023civilian} This is consistent with an attempt to demoralize and divide the parties who would otherwise be sympathetic to efforts against AI-driven disempowerment.

\subsection{Copyright lawsuits}

As AI output becomes increasingly capable of substituting for the output of creative workers, a burgeoning frontier of conflict has emerged pertaining to these creative workers' intellectual-property rights. This becomes especially pertinent when AI systems are trained using copyrighted works. The increasing prevalence of AI-generated content has led more and more creatives to file copyright lawsuits against the companies responsible. \supercite{vincent2023ai} 

However, these copyright lawsuits have yet to see results, and this may continue to hold in the future. To illustrate, during a hearing for the class-action lawsuit filed by Sarah Andersen et al. (against Stability AI, DeviantArt, and Midjourney), the judge overseeing the case stated that he was inclined to dismiss the near-entirety of artist plaintiffs' entire complaint, albeit with the caveat that they can re-file in the future. \supercite{quach2023judge}

One potential factor contributing to artists' unsuccessful efforts to organize successful copyright lawsuits---against AI art models trained on their work---is a dearth of capable lawyers who are willing to legally advise or represent them. Alex Champandard, an activist for artists' intellectual-property rights, searched for a lawyer by writing the following plea \supercite{champandard2023dcma}:

\begin{quote}
``I am looking for [a] lawyer with experience in filing DCMA claims, specifically with regards to tools that `circumvent technological protection measures [...] that control access to copyrighted works.'...

...I realize this is like sounding the Horn of Gondor in the hope that a Unicorn might show up! I have yet to find many lawyers on \#TeamHuman that want to see cooperative and consensually-built AI.''
\end{quote}

Activists who want to help artists file promising copyright lawsuits are currently having trouble finding willing lawyers. However, there are signs that the more advanced AI models of the future may be capable of making many lawyers economically useless as well. \supercite{tan2023chatgpt} If lawyers were to realize that AI may be an imminent threat---not just to artists, but also to themselves---they may be more willing to join forces with artists in joint resistance against AI-driven disempowerment.

\subsection{Contract negotiations and strikes for AI-ban clauses}

Strikes advocating for AI-ban clauses constitute a formidable form of resistance against the encroachment of AI systems into the workforce. The Writers Guild of America (WGA) and the Screen Actors Guild (SAG-AFTRA) have mounted a joint strike against entertainment studios, with contractual protections against AI encroachment being one of the most important points of disagreement. \supercite{lawler2023aihollywood}  In the view of WGA/SAG-AFTRA member and AI consultant Justine Bateman, one factor that had contributed to the deadlock was that ``both the studios and the streamers think they can win against encroaching generative AI video tech companies,'' \supercite{bateman2023destruction} as long as they themselves pivot to AI. This thinking, however, may be mistaken.  Bateman instead predicted that generative AI companies ``are going to eat [the studios and streamers'] lunch.'' As is consistent with our model of divide-and-conquer, entertainment studios and streamers' underestimation of the threat AI poses to their own livelihoods can cause them to deprioritize movement unity: to the point of infighting with their own writers and actors at a time of industry-wide crisis. If entertainment studios and streamers had realized that they are in the same boat as actors and writers---the boat of indefinite AI replacement---it is plausible that they might instead have united with these very same actors and writers against AI encroachment into the film industry.

After their 146-day strike---the second longest in its history---the WGA won a deal with significant protections against AI encroachment. \supercite{wilkinson2023hollywood} The WGA's hard-fought and ultimately successful contract negotiation was likely facilitated by its members' sense of urgency: their decision to act early, rather than wait-and-see. This is illustrated by Bateman's words\supercite{taylor2023hollywood} during the strike:
\begin{quote}
``This is the last time any of these unions will have leverage. We see how much has happened in the last, what has it been, two months? In three years, we won't have this kind of leverage. So if we want to get anything for the next three years, we've got to go all in.'' 
\end{quote}
 WGA's contract win also helped set the tone for SAG-AFTRA's contract win,\supercite{alexander2023actors} an illustrative example of how different stakeholders of the pro-human movement can cooperate synergistically.

\subsection{Whistleblowers}

Historically, whistleblowers have often been crucial for informing the public of organizations' concealed wrongdoing,  ethical concerns, or breaches of safety protocols.\supercite{CORDIS2017289} In the framework of our model, efforts to persuade insiders of AI companies and the AI industry overall to publicly come forward with damning evidence are well-characterized as the pro-human movement's efforts to divide and conquer the AI replacement threat. Moreover, any form of support that stakeholders of the pro-human movement can provide potential whistleblowers who are on the fence, such as contacts in the government and media, legal advice, and various forms of operational support, would be aided by the synergistic cooperation enabled by a more united movement.

\subsection{Political advocacy for AI regulation}

One can argue that all the previous battlefields ultimately lead to this arguably most important battlefield. Stakeholders can advocate for the adoption of laws and regulations that presciently guide AI development, and impose robust restrictions on its application to use cases that threaten indefinite human economic uselessness. Stakeholders may, for example, advocate for the establishment of legal, regulatory, governmental, and inter-governmental infrastructures pertaining to the use of advanced AI in society.\supercite{sepasspour2023reality} Advocates of AI deregulation, in contrast, would advocate for the continued absence of such laws and regulations. Who prevails in this debate will be largely determined by the degree to which the voters, lobbying actors, and politicians comprising the pro-human movement are united or divided---for instance, in their efforts to advocate for supportive policies---relative to those comprising the AI replacement movement. If the former triumphs over the latter, temporary wins---such as the WGA's and SAG-AFTRA's contractual protections against AI encroachment---can be elevated to permanent wins at the level of government policy.

\subsection{A race to get most people hooked on AI before they mobilize against it}

The above ``battlefields'' corresponded to activities by which the pro-human movement can take the initiative. The AI replacement movement can take the initiative in a different kind of battlefield: increasing societal dependence on AI before the public can effectively mobilize against it.
 AI significant others \supercite{titcomb2023} and other AI products can be similarly designed to hook users quickly, potentially before they fully comprehend the long-term implications of their dependence. These AI applications may one day learn about individual users' preferences at an unprecedented level of detail, continually adapting and refining their interactions to maximize engagement. This hyperpersonalization, while increasing user convenience and satisfaction in the short term, can raise significant concerns about individual autonomy, privacy, and mental health in the long term. Moreover, it can potentially lead to a society unable to resist the allure of AI.

As such, one of the critical battles will be that of speed. On one side, we have the development and propagation of addictive AI technologies, fueled by wealthy corporations with substantial resources and social-influence tools at their disposal. On the other side, we have the potentially slower, but potentially broader bulwark of public mobilization and policymaking. Whether the latter will be able to keep up with the former---just like whether a military force is able to keep up with an maneuvering adversary attempting to defeat it in detail---may be especially important.

Given that speed is of the essence, a precommitment to act on the threat only after seeing clear signs of the threat manifest \supercite{coldewey2023ethicists} may all but ensure that crucial opportunities are missed. In a pandemic, for instance, many lives can be saved by acting early: by acting well before clear signs of the threat manifest.\supercite{bonardi2020fast} It is likewise pivotal that the future victims of AI replacement are mobilized against the threat---via accurate panic---sooner rather than later: before future victims join current victims' fate in losing their incomes and leverage, thereby losing their opportunity to support current victims' efforts against the shared threat. 

\section*{Acknowledgements}

We would like to thank Emily Dardaman, Tara Rezaei Kheirkhah, Vedang Lad, Ziming Liu, and Leo Yao for their thoughtful and helpful comments. We are also grateful to Mona Xue for suggesting the example of pandemics to demonstrate the importance of  acting early, before clear signs of the threat manifest. P.S.P. is funded by the MIT Department of Physics.

\section*{Data availability statement}

Figures and calculations were done via Mathematica 13, and the corresponding data are available at the Open Science Framework database \url{https://osf.io/tfnrw/}.

\printbibliography

\pagebreak
\newpage

\begin{appendices}

\section{Proof of Proposition \ref{prop:unitygood}}\label{appendix:serious}

We prove the proposition for the most general model (\ref{generalform}) of the battle-win probability. The corresponding expression for the overall win probability of Movement 1, $q$, can be computed as follows. Consider a Markov chain with three distinguished states: the initial state $i=n$, the terminal state $i=0$, and the other terminal state $i=m+n$. Conditional on being at the initial state $i=n$, the overall win probability is given by the equation
\begin{equation}\label{gamblers}
q=p E(p_{+}, n-1,1)+p (1-E(p_{+}, n-1,1))q + (1-p)  E(p_{-}, 1,m-1) q,
\end{equation}
where 
\begin{equation}
p_{\pm}=\sigma\left[\sigma^{-1}\left(\frac{1}{1+\left(sm/n\right)^{1/R}} \right)\pm\gamma\right],
\end{equation}
and $E(x,j,k)$ denotes the overall win probability of a generalized gambler's ruin in which Player 1 has a battle-win probability of $x$, Player 1's winning state is $j$ away from the initial state, and Player 2's winning state is $k$ away from the initial state. This is because the overarching Markov chain can be split into two generalized gambler's ruin Markov chains, one of which has battle-win probability $p_{+}$ and the other of which has battle-win probability $p_{-}$.

Solving for $q$, we obtain
\begin{align}
q&=\frac{p E(p_{+}, n-1,1)}{1-p (1-E(p_{+}, n-1,1))-(1-p)  E(p_{-}, 1,m-1)}
\nonumber\\&=\frac{\left(-1+\left(e^{-\gamma} \left(\frac{s m}{n}\right)^{1\over R}\right)^m\right)\left(-1+e^{\gamma} \left(\frac{s m}{n}\right)^{1\over R}\right)}{\mathcal D_0},
\end{align}
where the denominator expression is given by
\begin{align}
\mathcal D_0=&1+e^{\gamma}\Bigg( \left(e^{-\gamma} \left(\frac{s m}{n}\right)^{1\over R}\right)^m +e^{\gamma}\left(e^{-\gamma} \left(\frac{s m}{n}\right)^{1\over R}\right)^{1+m}  
\nonumber\\&\hspace{60pt}-\left(e^{-\gamma} \left(\frac{s m}{n}\right)^{1\over R}\right)^m\left(e^{\gamma} \left(\frac{s m}{n}\right)^{1\over R}\right)^n-\left(\frac{s m}{n}\right)^{1\over R}
\Bigg)
\nonumber\\&\hspace{10pt}- \left(e^{-\gamma} \left(\frac{s m}{n}\right)^{1\over R}\right)^m+e^{\gamma}\left(e^{-\gamma} \left(\frac{s m}{n}\right)^{1\over R}\right)^{1+m}\left(\left(e^{\gamma} \left(\frac{s m}{n}\right)^{1\over R}\right)^n-1\right).
\end{align}

Taking the partial derivative with respect to $m$, we obtain
\begin{equation}
\frac{\partial q}{\partial m}=\frac{\mathcal N_1}{\mathcal D_1},
\end{equation}
with numerator expression
\begin{align}
\mathcal N_1=& \left(e^{-\gamma} \left(\frac{s m}{n}\right)^{1\over R}\right)^m\nonumber\\&\cdot \Bigg[ \frac{1}{R} \Bigg(e^{\gamma} \left(1+\left(\frac{s m}{n}\right)^{2\over R}\right)\Bigg(m\left(-1+\left(e^{\gamma} \left(\frac{s m}{n}\right)^{1\over R}\right)^n\right)
\nonumber\\& \hspace{120pt}-\left(-1+\left(e^{-\gamma} \left(\frac{s m}{n}\right)^{1\over R}\right)^m\right)\left(e^{\gamma} \left(\frac{s m}{n}\right)^{1\over R}\right)^n n\Bigg)\nonumber\\&\hspace{30pt}+e^{2\gamma}\left(\frac{s m}{n}\right)^{1\over R}\Bigg(m-m\left(e^{\gamma} \left(\frac{s m}{n}\right)^{1\over R}\right)^n \nonumber\\&\hspace{85pt}+ \left(-1+\left(e^{-\gamma} \left(\frac{s m}{n}\right)^{1\over R}\right)^m\right)\nonumber\left(1+(n-1)\left(e^{\gamma} \left(\frac{s m}{n}\right)^{1\over R}\right)^n  \right)   \Bigg)
\nonumber\\&\hspace{30pt}+\left(\frac{s m}{n}\right)^{1\over R}\Bigg(m-m\left(e^{\gamma} \left(\frac{s m}{n}\right)^{1\over R}\right)^n \nonumber\\&\hspace{85pt}+ \left(-1+\left(e^{-\gamma} \left(\frac{s m}{n}\right)^{1\over R}\right)^m\right)\nonumber\left(-1+(n+1)\left(e^{\gamma} \left(\frac{s m}{n}\right)^{1\over R}\right)^n  \right)   \Bigg)\Bigg)
\nonumber\\&\hspace{30pt}+m\left(-1+\left(e^{\gamma} \left(\frac{s m}{n}\right)^{1\over R}\right)^n\right)
\nonumber\\&\hspace{60pt}\cdot \left( e^{\gamma} \left(1+\left(\frac{s m}{n}\right)^{2\over R}\right) - (e^{2\gamma}+1)\left(\frac{s m}{n}\right)^{1\over R} \right)\log\left(e^{-\gamma}\left(\frac{s m}{n}\right)^{1\over R}\right)\Bigg]
\end{align}
and a positive denominator expression $\mathcal D_1$.

It suffices to prove that $\mathcal N_1$ is negative for all sufficiently large $s$. This is equivalent to proving that the expression in the brackets,
\begin{align}
\frac{1}{R} &\Bigg(e^{\gamma} \left(1+\left(\frac{s m}{n}\right)^{2\over R}\right)\Bigg(m\left(-1+\left(e^{\gamma} \left(\frac{s m}{n}\right)^{1\over R}\right)^n\right)
\nonumber\\& \hspace{120pt}-\left(-1+\left(e^{-\gamma} \left(\frac{s m}{n}\right)^{1\over R}\right)^m\right)\left(e^{\gamma} \left(\frac{s m}{n}\right)^{1\over R}\right)^n n\Bigg)\nonumber\\&\hspace{30pt}+e^{2\gamma}\left(\frac{s m}{n}\right)^{1\over R}\Bigg(m-m\left(e^{\gamma} \left(\frac{s m}{n}\right)^{1\over R}\right)^n \nonumber\\&\hspace{85pt}+ \left(-1+\left(e^{-\gamma} \left(\frac{s m}{n}\right)^{1\over R}\right)^m\right)\nonumber\left(1+(n-1)\left(e^{\gamma} \left(\frac{s m}{n}\right)^{1\over R}\right)^n  \right)   \Bigg)
\nonumber\\&\hspace{30pt}+\left(\frac{s m}{n}\right)^{1\over R}\Bigg(m-m\left(e^{\gamma} \left(\frac{s m}{n}\right)^{1\over R}\right)^n \nonumber\\&\hspace{85pt}+ \left(-1+\left(e^{-\gamma} \left(\frac{s m}{n}\right)^{1\over R}\right)^m\right)\nonumber\left(-1+(n+1)\left(e^{\gamma} \left(\frac{s m}{n}\right)^{1\over R}\right)^n  \right)   \Bigg)\Bigg)
\nonumber\\&\hspace{30pt}+m\left(-1+\left(e^{\gamma} \left(\frac{s m}{n}\right)^{1\over R}\right)^n\right)
\nonumber\\&\hspace{60pt}\cdot \left( e^{\gamma} \left(1+\left(\frac{s m}{n}\right)^{2\over R}\right) - (e^{2\gamma}+1)\left(\frac{s m}{n}\right)^{1\over R} \right)\log\left(e^{-\gamma}\left(\frac{s m}{n}\right)^{1\over R}\right)\label{express}
\end{align}
is negative for all sufficiently large $s$. This can be proven by noting that for all sufficiently large $s$, the leading term
\begin{equation}\label{whole}
  -  \frac{1}{R} e^{\gamma} \left(1+\left(\frac{s m}{n}\right)^{2\over R}\right)\left(-1+\left(e^{-\gamma} \left(\frac{s m}{n}\right)^{1\over R}\right)^m\right)\left(e^{\gamma} \left(\frac{s m}{n}\right)^{1\over R}\right)^n n
\end{equation}
is negative and has a magnitude that exceeds the sum of the magnitudes of all nine other terms in the expression (\ref{express}). Indeed, observe that one-sixth of the expression (\ref{whole}),
\begin{equation}\label{oneover}
  -  \frac{1}{6R} e^{\gamma} \left(1+\left(\frac{s m}{n}\right)^{2\over R}\right)\left(-1+\left(e^{-\gamma} \left(\frac{s m}{n}\right)^{1\over R}\right)^m\right)\left(e^{\gamma} \left(\frac{s m}{n}\right)^{1\over R}\right)^n n,
\end{equation}
exceeds in magnitude each of the six positive terms (defined below) for sufficiently large $s$. 

We use the following lemma.

\begin{lemma}\label{lem}
Let $f,g$ be two continuous positive functions defined on $m \ge 1$. Suppose that $f(m) \gg g(m)$, in that there exists $C>0$ and $m_0 \ge 1$ such that $Cf(m) > g(m)$ for all $m>m_0$. Then, there exists $C'>0$ such that $C'f(m) > g(m)$ for all $m \ge 1$. 
\end{lemma}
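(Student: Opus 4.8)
The plan is to exploit the compactness of the interval $[1,m_0]$ and reduce everything to the extreme value theorem, handling the bounded and unbounded regimes separately. On the unbounded piece $(m_0,\infty)$ the hypothesis already furnishes the desired domination with the given constant $C$, so the entire content of the lemma is to cover the bounded leftover piece $[1,m_0]$ with a possibly larger constant, and then to take the maximum of the two constants.

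First I would restrict attention to the compact interval $[1,m_0]$. Since $f$ is continuous and positive there, the extreme value theorem gives that it attains a minimum $\mu := \min_{m \in [1,m_0]} f(m)$, and positivity of $f$ forces $\mu > 0$. Likewise $g$ attains a maximum $M := \max_{m \in [1,m_0]} g(m)$, which is finite. Then for any constant $C_1 > M/\mu$ and any $m \in [1,m_0]$ I have $C_1 f(m) \ge C_1 \mu > M \ge g(m)$, so the single constant $C_1$ dominates $g$ by $f$ across the whole compact piece.

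Finally I would set $C' := \max\{C, C_1\}$. Because $f$ is positive, replacing a constant by a larger one can only strengthen an inequality of the form ``constant times $f$ exceeds $g$'': for $m \in [1,m_0]$ I get $C' f(m) \ge C_1 f(m) > g(m)$, and for $m > m_0$ I get $C' f(m) \ge C f(m) > g(m)$. Hence $C' f(m) > g(m)$ for all $m \ge 1$, as required.

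The argument is essentially routine, and the only step that genuinely draws on the hypotheses is the claim $\mu > 0$: this is precisely where both the continuity of $f$ (needed to attain the infimum on the compact set, rather than merely approach it) and the positivity of $f$ are indispensable. A secondary point worth keeping in mind is that the hypothesis asserts strict domination only for $m > m_0$, leaving the single endpoint $m = m_0$ uncovered; but this point lies in $[1,m_0]$ and is therefore absorbed by the compactness argument, so no gap arises at the seam between the two regimes.
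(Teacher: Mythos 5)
Your proof is correct and follows essentially the same route as the paper's: both arguments use compactness of $[1,m_0]$ to bound $\min f > 0$ and $\max g < \infty$, choose a constant dominating $g$ by $f$ on that interval, and then take the maximum with the given constant $C$ to cover all $m \ge 1$. Your explicit handling of the endpoint $m = m_0$ (left uncovered by the strict hypothesis $m > m_0$) is a nice touch of care, but it does not change the substance of the argument.
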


The lemma is proven in two steps. First, a constant $C''>0$ is chosen large enough so that 
\begin{equation}
C''\min_{m \in [1,m_0]} f(m) > \max_{m \in [1,m_0]} g(m),
\end{equation}
where the minimum and maximum values are well-defined due to compactness. Second, we take $C'=\max(C, C'')$, so that $C'f(m) > g(m)$ for all $m \ge 1$.

We now return to the proof of the main result.  Recall that in our context, $m, n \ge 1$ and $s,R>0$.

\subsection{First term}\label{subsec:first}

First, we check that (\ref{oneover}) is larger in magnitude  than 
\begin{equation}
 \frac{1}{R} e^{\gamma} \left(1+\left(\frac{s m}{n}\right)^{2\over R}\right)m\left(-1+\left(e^{\gamma} \left(\frac{s m}{n}\right)^{1\over R}\right)^n\right)
\end{equation}
for sufficiently large $s$. In other words, we need to show that
\begin{equation}
    \frac{1}{6} \left(-1+\left(e^{-\gamma} \left(\frac{s m}{n}\right)^{1\over R}\right)^m\right)\left(e^{\gamma} \left(\frac{s m}{n}\right)^{1\over R}\right)^n n > m\left(-1+\left(e^{\gamma} \left(\frac{s m}{n}\right)^{1\over R}\right)^n\right)
\end{equation}
for sufficiently large $s$. It suffices to show that 
\begin{equation}\label{1inequality}
    \frac{1}{6} \left(-1+\left(e^{-\gamma} \left(\frac{s m}{n}\right)^{1\over R}\right)^m\right) n > m
\end{equation}
for sufficiently large $s$. This can be rearranged to the inequality 
\begin{equation}
    \left(e^{-\gamma} \left(\frac{s m}{n}\right)^{1\over R}\right)^m n > \frac{6m}{n}+1.
\end{equation}
Since $m \ge 1$, it suffices to show that 
\begin{equation}
   s^{1 \over R} \left(e^{-\gamma} \left(\frac{m}{n}\right)^{1\over R}\right)^m n > \frac{6m}{n}+1.
\end{equation}
is true for sufficiently large $s$. By Lemma~\ref{lem}, there exists $C>0$ such that 
\begin{equation}
   C \left(e^{-\gamma} \left(\frac{m}{n}\right)^{1\over R}\right)^m n > \frac{6m}{n}+1.
\end{equation}
for all $m \ge 1$. As long as $s > C^R$, our desired inequality is true.

\subsection{Second term}\label{subsec:second}

Second, we check that (\ref{oneover}) is larger in magnitude than 
\begin{equation}
 \frac{1}{R}e^{2\gamma}\left(\frac{s m}{n}\right)^{1\over R}m
\end{equation}
for sufficiently large $s$. Equivalently, we need to show that for sufficiently large $s$, we have
\begin{equation}
\frac{1}{6}e^{-\gamma} \left(\frac{s m}{n}\right)^{1\over R}\left(-1+\left(e^{-\gamma} \left(\frac{s m}{n}\right)^{1\over R}\right)^m\right)\left(e^{\gamma} \left(\frac{s m}{n}\right)^{1\over R}\right)^n n > m
\end{equation}
for all $m \ge 1$.  Without loss of generality, assume that $s$ is large enough that 
\begin{equation}
e^{-\gamma} \left(\frac{s m}{n}\right)^{1\over R}\left(e^{\gamma} \left(\frac{s m}{n}\right)^{1\over R}\right)^n n >1
\end{equation}
for all $m\ge 1$. Then, it suffices to show that, possibly by taking $s$ larger, the inequality (\ref{1inequality}) is true for all $m\ge 1$. We have already shown this in Subsection~\ref{subsec:first}.

\subsection{Third term}

Third, we check that (\ref{oneover}) is larger in magnitude  than 
\begin{equation}
\frac{1}{R} e^{2\gamma}\left(\frac{s m}{n}\right)^{1\over R} \left(-1+\left(e^{-\gamma} \left(\frac{s m}{n}\right)^{1\over R}\right)^m\right)\left(1+(n-1)\left(e^{\gamma} \left(\frac{s m}{n}\right)^{1\over R}\right)^n  \right) 
\end{equation}
Equivalently, we need to show that for sufficiently large $s$, we have
\begin{equation}
\frac{1}{6}e^{-\gamma} \left(\left(\frac{s m}{n}\right)^{-1\over R}+\left(\frac{s m}{n}\right)^{1\over R}\right)\left(e^{\gamma} \left(\frac{s m}{n}\right)^{1\over R}\right)^n n  > \left(1+(n-1)\left(e^{\gamma} \left(\frac{s m}{n}\right)^{1\over R}\right)^n  \right).
\end{equation}
Without loss of generality, we can assume that $s$ is large enough that 
\begin{equation}
    \frac{1}{6}e^{-\gamma} \left(\left(\frac{s m}{n}\right)^{-1\over R}+\left(\frac{s m}{n}\right)^{1\over R}\right) > 1
\end{equation}
for all $m \ge 1$. Then, it suffices to show that, possibly by taking $s$ larger, the inequality 
\begin{equation}
\left(e^{\gamma} \left(\frac{s m}{n}\right)^{1\over R}\right)^n n  > \left(1+(n-1)\left(e^{\gamma} \left(\frac{s m}{n}\right)^{1\over R}\right)^n  \right)
\end{equation}
holds for all $m\ge 1$, which simplifies to 
\begin{equation}
\left(e^{\gamma} \left(\frac{s m}{n}\right)^{1\over R}\right)^n   > 1.
\end{equation}
This inequality is clearly true for $s$ sufficiently large.

\subsection{Fourth term}

Fourth, we check that (\ref{oneover}) is larger in magnitude than \begin{equation}
 \frac{1}{R}\left(\frac{s m}{n}\right)^{1\over R}m
\end{equation}
for sufficiently large $s$. This follows from the result shown in Subsection~\ref{subsec:second}

\subsection{Fifth term}
Fifth, we check that (\ref{oneover}) is larger in magnitude than 
\begin{equation}
 \frac{1}{R}\left(\frac{s m}{n}\right)^{1\over R} \left(-1+(n+1)\left(e^{\gamma} \left(\frac{s m}{n}\right)^{1\over R}\right)^n  \right).
\end{equation}

Equivalently, we need to show that for sufficiently large $s$, we have
\begin{equation}\label{thein}
\frac{1}{6}e^{\gamma} \left(\left(\frac{s m}{n}\right)^{-1\over R}+\left(\frac{s m}{n}\right)^{1\over R}\right) > \left(\frac{1}{\left(e^{\gamma} \left(\frac{s m}{n}\right)^{1\over R}\right)^n n }+\frac{n+1}{n}  \right).
\end{equation}
As $s \to \infty$, the right-hand side is upper-bounded by a constant, whereas the left-hand side grows to infinity. Thus, for sufficiently large $s$, the inequality  (\ref{thein}) holds.

\subsection{Sixth term}
Finally, we check that (\ref{oneover}) is larger in magnitude than the positive term in
\begin{align}
 \frac{1}{R}m&\left(-1+\left(e^{\gamma} \left(\frac{s m}{n}\right)^{1\over R}\right)^n\right)
\nonumber\\&\hspace{20pt}\cdot \left( e^{\gamma} \left(1+\left(\frac{s m}{n}\right)^{2\over R}\right) - (e^{2\gamma}+1)\left(\frac{s m}{n}\right)^{1\over R} \right)\log\left(e^{-\gamma}\left(\frac{s m}{n}\right)^{1\over R}\right).\label{sixthterm}
\end{align}
Without loss of generality, we can assume that $s$ is large enough that the factors $\left(-1+\left(e^{\gamma} \left(\frac{s m}{n}\right)^{1\over R}\right)^n\right)$ and $\log\left(e^{-\gamma}\left(\frac{s m}{n}\right)^{1\over R}\right)$ are positive for all $m\ge 1$. This means that the positive term in the expression (\ref{sixthterm}) is
\begin{equation}
    \frac{1}{R}m\left(-1+\left(e^{\gamma} \left(\frac{s m}{n}\right)^{1\over R}\right)^n\right)e^{\gamma} \left(1+\left(\frac{s m}{n}\right)^{2\over R}\right)\log\left(e^{-\gamma}\left(\frac{s m}{n}\right)^{1\over R}\right) 
\end{equation}
We need to show that for all sufficiently large $s$,
\begin{equation}
 \frac{n}{6m}  \left(-1+\left(e^{-\gamma} \left(\frac{s m}{n}\right)^{1\over R}\right)^m\right) > \left(-\frac{1}{\left(e^{\gamma} \left(\frac{s m}{n}\right)^{1\over R}\right)^{n}}+1\right)\log\left(e^{-\gamma}\left(\frac{s m}{n}\right)^{1\over R}\right)
\end{equation}
 for all $m\ge 1$. It suffices to show that  for all sufficiently large $s$,
 \begin{equation}
 \frac{n}{6m} \left(-1+\left(e^{-\gamma} \left(\frac{s m}{n}\right)^{1\over R}\right)^m\right)  >\log\left(e^{-\gamma}\left(\frac{m}{n}\right)^{1\over R} \right) + \frac{1}{R} \log s
\end{equation}
 for all $m\ge 1$. This can be shown by proving that for sufficiently large $s$, the two inequalities 
  \begin{equation}\label{firstofinequality}
 \frac{n}{12m} \left(-1+\left(e^{-\gamma} \left(\frac{s m}{n}\right)^{1\over R}\right)^m\right)  >\log\left(e^{-\gamma}\left(\frac{m}{n}\right)^{1\over R} \right)
\end{equation}
and 
 \begin{equation}\label{secondofinequality}
 \frac{n}{12m} \left(-1+\left(e^{-\gamma} \left(\frac{s m}{n}\right)^{1\over R}\right)^m\right)  > \frac{1}{R} \log s
\end{equation}
are true for all $m \ge 1$. To show the first inequality (\ref{firstofinequality}), rearrange it to the form
 \begin{equation}
 \frac{n}{12m} \left(e^{-\gamma} \left(\frac{s m}{n}\right)^{1\over R}\right)^m  >\log\left(e^{-\gamma}\left(\frac{m}{n}\right)^{1\over R} \right) + \frac{n}{12m}
\end{equation}
and apply Lemma~\ref{lem} in a similar manner as in Subsection~\ref{subsec:first}. To show the second inequality (\ref{secondofinequality}), rearrange it to the form
 \begin{equation}
 \frac{n}{12m} \left(e^{-\gamma} \left(\frac{ m}{n}\right)^{1\over R}\right)^m  s^{m \over R} >\frac{1}{R} \log s+ \frac{n}{12m},
\end{equation}
observe that this would follow from showing that 
 \begin{equation}
\min_{m \ge 1}\left[ \frac{n}{12m} \left(e^{-\gamma} \left(\frac{ m}{n}\right)^{1\over R}\right)^m \right]  s^{1 \over R} >\frac{1}{R} \log s+ \frac{n}{12}
\end{equation}
for all sufficiently large $s$, and noting that this is true because polynomial growth is faster than logarithmic growth. 

This concludes the proof of the monotonic decrease.

\subsection{Proof of convergence to zero}

We see that the limit of $q$ as $m \to\infty$ is zero because the numerator expression 
\begin{equation}
\left(-1+\left(e^{-\gamma} \left(\frac{s m}{n}\right)^{1\over R}\right)^m\right)\left(-1+e^{\gamma} \left(\frac{s m}{n}\right)^{1\over R}\right)
\end{equation}
is negligible compared to the leading term of the denominator expression $\mathcal D_0$, given by
\begin{equation}\label{leadingt}
e^{\gamma}\left(e^{-\gamma} \left(\frac{s m}{n}\right)^{1\over R}\right)^{1+m}\left(\left(e^{\gamma} \left(\frac{s m}{n}\right)^{1\over R}\right)^n-1\right).
\end{equation}
\end{appendices}

\end{document}